\def\draft{0} %
\def\llncs{0}
\def\anon{0}
\def\ShowAuthNotes{1}
\def\ShowAuthNotes{0}
\newcommand{\remove}[1]{}
\newcommand{\ignore}[1]{}
\definecolor{DarkBlue}{RGB}{0,0,150}
\definecolor{DarkRed}{RGB}{150,0,0}
\definecolor{darkgreen}{rgb}{0.0, 0.2, 0.13}
\newtheorem{theorem}{Theorem}[section]
\newtheorem{proposition}[theorem]{Proposition}
\newtheorem{definition}[theorem]{Definition}
\newtheorem{claim}[theorem]{Claim}
\newtheorem{lemma}[theorem]{Lemma}
\newtheorem{remark}[theorem]{Remark}
\newtheorem{corollary}[theorem]{Corollary}
\newtheorem{test}{Test}[section]
\def \w {\omega}
\def\Z{\mathbb{Z}}
\def \q {\mathfrak{q}}
\def \lra {\longrightarrow}
\def\cC{{\cal C}}
\def\cG{{\cal G}}
\def\cI{{\cal I}}
\def\cP{{\cal P}}
\def\cS{{\cal S}}
\def\bbC{{\mathbb C}}
\def\bbE{{\mathbb E}}
\def\bbF{{\mathbb F}}
\def\bbN{{\mathbb N}}
\def\bbZ{{\mathbb Z}}
\def\Z{{\mathbb Z}}
\def\binset{\{0,1\}}
\def\q2{\lfloor q/2 \rceil}
\newcommand{\abs}[1]{\left\vert {#1} \right\vert}
\newcommand{\norm}[1]{\left\| {#1} \right\|}
\def\poly{{\rm poly}}
\newcommand{\authnote}[3]{\textcolor{#3}{[{\footnotesize {\bf #1:} { {#2}}}]}}
\newcommand{\authnote}[3]{}
\newcommand{\ket}[1]{|{#1}\rangle}
\newcommand{\bra}[1]{\langle{#1}|}
\newcommand{\tr}{\tau}
\newcommand{\ntr}{\hat{\tau}}
\def \ra {\rightarrow}
\def \C {\mathbb{C}}
\def \Z {\mathbb{Z}}
\newcommand{\qedllncs}{\ifnum\llncs=1{$\square$}\fi}
\title{Unitary Subgroup Testing}
\author{}
\institute{}
\author{Zvika Brakerski
	\and Devika Sharma
	\and Guy Weissenberg}
\institute{}
\renewcommand{\paragraph}{\boldpar}
\author{}
\author{
	Zvika Brakerski \thanks{Weizmann Institute of Science, Israel,  \texttt{\{zvika.brakerski,devika.sharma,guy.weissenberg\}@weizmann.ac.il}.	
		Supported by the Binational Science Foundation (Grant No.\ 2016726), and by the European Union Horizon 2020 Research and Innovation Program via ERC Project REACT (Grant 756482) and via Project PROMETHEUS (Grant 780701).}
	\and Devika Sharma \footnotemark[1]
	\and Guy Weissenberg \footnotemark[1]
}
\date{\today}
\date{}
\begin{document}

\maketitle
\begin{abstract}
We consider the problem of \emph{subgroup testing} for a quantum circuit $C$: given access to $C$, determine whether it implements a unitary that is $a$-close or $b$-far from a subgroup $\cG$ of the unitary group. It encompasses the problem of exact testing, property testing and tolerant testing. In this work, we study these problems with the group $\cG$ as the trivial subgroup (i.e.\ identity testing) or the Pauli or Clifford group and their $q$-ary extension, and a \emph{promise} version of these problems where $C$ is promised to be in some subgroup of the unitaries that contains $\cG$ (e.g.\ identity testing for Clifford circuits).

Our main result is an equivalence between Pauli testing, Clifford testing and Identity testing. We derive the equivalence between Clifford and Identity testing by showing a structural property of the Clifford unitaries. Namely, that their (normalized) trace lies in the discrete set $\{2^{-k/2}: k \in \bbN\} \cup \{0\}$, regardless of the dimension. We also state and prove the analogous property for the $q$-ary Cliffords. This result allows us to analyze a very simple single-query identity test under the Clifford/Pauli promise. To prove the equivalence between Pauli and Identity testing, we analyze the conjugation action of a non-Pauli unitary on the Pauli group and show that its distance from the Pauli group affects the number of fixed points. We believe that these results are of interest, independent of their application to derive the equivalence considered in this work.

We use the equivalence to compare (and thus establish) computational hardness for the problems of Pauli and Clifford testing.

\end{abstract}

\section{Introduction}
In the current NISQ-era, closely monitoring the evolution of a quantum system is an essential task, and that makes property testing elemental. Labs maintaining and building quantum infrastructure, in practice, use the standard approach of quantum process tomography (QPT) \cite{nielsen-chung} to learn full information about the quantum process. A 
similar problem,
that of testing whether a unitary operator satisfies a certain property, is considered in this work. Given a subgroup $\cG$ of the unitary group and a unitary operator $C$, we consider the question of deciding how far or close $C$ is from $\cG$, w.r.t a specified distance measure. For the choice of the group, we restrict to $\cG$ being (the $q$-ary versions of) the Identity, the Pauli or the Clifford group, and given the central role the groups play in error correction and fault tolerance computing, we believe it is a good choice for a set of subgroups.

We consider three avatars of the subgroup testing problem. First, the exact version, where one decides whether $C$ belongs to $\cG$. This version is independent of any distance measure. Second, the standard setup of property testing, where one decides if $C$ is in $\cG$ or is at least $b$-far away from all elements of $\cG$, w.r.t a specified distance measure. See \cite{wangPropertyTesting,montanaro2018survey}. Finally, we consider the most general version, known as tolerant testing: the problem of deciding if $C$ is $a$-close ($a>0$) to an element of $\cG$ or is $b$-far away from all elements of $\cG$.

Our subgroup testing problem generalizes the well-researched identity-testing problem, wherein one is 
required to decide whether the unitary operator $C$ is equivalent to the identity operator. The problem of
exact identity testing was proven to be co-NQP-complete \cite{tanaka2009exact}, whereas the problem of tolerant testing for the trivial\footnote{Throughout this work, we consider unitaries with identical functionality to be equivalent, namely we ignore global phase. Thus, the ``trivial group'' we refer to, in fact, contains all unitaries that are equal to the identity up to a  global phase. The same holds for all other subgroups that we discuss.} subgroup
in the operator norm,  
was proven to be co-QMA-complete in~\cite{NonIDtesting}.
It is, therefore, natural to inquire about the hardness of subgroup testing beyond the trivial subgroup.

We present a new angle to investigate the hardness of subgroup testing via means of reduction.
We reduce the problem of 
subgroup testing, $\cG$-testing, for a subgroup $\cG$, to two independent problems: the problem of $\cS$-testing, for a subgroup $\cS \supseteq \cG$, and a promise $(\cS, \cG)$-testing, a \emph{promise} version of the $\cG$-testing, where $C$ is guaranteed to not be an arbitrary unitary operator but rather to belong to the bigger subgroup $\cS$. This is a novel contribution of our work. 

We show that identity-testing (i.e.,\ when $\cG$ is the trivial group) is efficiently decidable when the promise group $\cS$ is the Pauli or the Clifford group. Due to its simplicity and efficiency, this test can be used as sanity checks while implementing an error correction code using Pauli and Clifford gates.  We devise separate (promise-) identity testers for Pauli/Clifford based on how one may access the input unitary operator $C$. If we are given a description of $C$ as a circuit containing gates from the Clifford group, we use the explicit description of the action of $C$ on a formal Pauli generator to determine if $C$ is equivalent to the identity. See Test~\ref{test:WB} for details. This test achieves perfect completeness and soundness (Theorem~\ref{thm:WBTest}), but we view this requirement of the input $C$ as quite restrictive. If, however, we are only allowed query access to the unitary $C$, we describe yet another test (Black-box Test~\ref{test:eprtest}) that requires a single query to $C$, yields perfect completeness and soundness that is simply the absolute value squared of the (normalized) trace of the unitary $C$. See Theorem~\ref{thm:BBTest} for details. This test does appear (in some form) in prior literature, in the binary case. However, our tests are generalized to hold true for qudits and the corresponding $q$-ary subgroups, for an odd prime $q$. The novelty of our work lies in the analysis of the soundness of our Black-box test, that exposes a property of the Clifford unitaries that we did not find in prior literature. Namely, that the trace of a Clifford unitary, in absolute value, must either be $0$ or a power of $\sqrt{q}$ (or $\sqrt{2}$ for binary Cliffords). We view this as a significant contribution of this work and hope that it will find other applications. The proof is based on straight-forward, yet clever modular arithmetic. We stress that our tests hold also for the qudit versions of Pauli and Clifford operators~\cite{Clark_2006,Farinholt} (for prime arity qudits), which are quite different in terms of functionality from their binary counterparts.  

As mentioned before, we observe that the promise $(\cS, \cG)$-subgroup testing can be viewed as a reduction from $\cG$-testing to $\cS$-testing. We give a general recipe, a.k.a the composition algorithm (Definition~\ref{algo:composition}), for constructing an algorithm for the $\cG$-testing problem by composing the algorithms for $\cS$-testing and the promise $(\cS, \cG)$-testing problems. In Section~\ref{subsec:ITP-PTP/CTP}, we use our identity-testers for Pauli/Clifford to show that the Identity-testing problem reduces to both the Pauli-testing and Clifford-testing problems, in all three avatars and w.r.t two distinct distance measures -- one induced by the operator norm, denoted $D^{op}$, and the second `average-case' distance norm, denoted $D^{tr}$, defined using the normalized trace of the unitary operators involved. See Section~\ref{sec:prelim} for the definition of the measures. 

Further, we give additional reductions from the problems of Clifford testing to the Pauli testing and the problems of Pauli testing to Identity testing in the setup of property testing and w.r.t the distance measure $D^{tr}$. 
To reduce the problem of Pauli testing to Identity testing, we analyze the conjugation action of a unitary on the group of unitaries $U_{q^n}(\C)$ to prove that a non-Pauli unitary can have at most $\frac{1}{q} \cdot |\cP_{q}^n|$ fixed points in the Pauli group (denoted $\cP_{q}^n$). See Lemma~\ref{lem:pauli_commute}. In fact, we show that the distance of $U$ from Pauli, $D^{tr}(C, \cP)$ determines the distance between the image and the pre-image, $D^{tr}(CPC^{\dagger},P)$, for a fraction of $P \in \cP_q^n$. See Lemma~\ref{lem:approximate_commutator}. We believe that these results maybe of general interest, independent of the context in this work.

As a consequence, the reductions imply that the exact versions of the Pauli and Clifford testing problems are at least as hard as the identity testing problem and therefore are co-NQP-hard, as exact Identity testing is co-NQP-complete \cite{tanaka2009exact}. Further, these exact problems are in fact equivalent under randomized reductions. Secondly, in the distance measure $D^{op}$, the problems of Pauli and Clifford testing (in the setup of tolerant testing) are at least as hard as the Identity testing problem and hence are both QMA hard (under Turing reductions).  See \cite{NonIDtesting}. Finally, in the distance measure $D^{tr}$, the three problems are equivalent under randomized reductions.

\paragraph{Comparison to previous works} In \cite{wangPropertyTesting}, the author gives algorithms for the Identity, Pauli and Clifford testing problems in the setup of property testing over qubits. The distance measure used by the author is equivalent to the distance measure $D^{tr}$ that we consider. Our reduction from Clifford testing to Pauli testing is an adaptation of \cite[Algorithm 4]{wangPropertyTesting} to the $q$-ary setting. 

In a concurrent work, \cite{Linden2021lightweight}, the authors discuss similar identity testing algorithms for Clifford circuits acting on qubits (i.e., for the task of deciding whether two Clifford circuits are identical). 
The correctness of their tests rely on results (\cite[Theorem 4 $\&$ 5]{Linden2021lightweight}) that are both special cases of our commutator lemma (Lemma \ref{lem:pauli_commute}) and our result about the discreteness of Clifford traces (Lemma \ref{thm:cliffordtraceoddq}), respectively. In particular, Theorem $5$ in~\cite{Linden2021lightweight} gives an upper bound on the trace of non-identity Clifford elements acting on qubits, while we show that these traces, in fact, form a discrete set. The discreetness of Clifford traces (in the binary) was also proved in \cite{Bravyi_2016}. Their proof is conceptually different from ours.

\section{Preliminaries}\label{sec:prelim}

\paragraph{Complexity theory} For (possibly randomized) algorithms for (possibly promise) decision problems, we use standard terminology. This text is fairly self-contained. For a detailed description, we recommend, \cite[Section 2, 6]{GoldreichComplexity}. As usual, completeness is the minimal probability of accepting a YES instance and soundness is the maximal probability of accepting a NO instance. Perfect completeness is completeness, $1$ and perfect soundness is soundness $0$. The difference between completeness and soundness is called the soundness gap of the algorithm. If the soundness gap is at least $\epsilon$ then it can be amplified, via standard repetition, to $1-\delta$ using $\poly(1/\epsilon, \log(1/\delta))$ many calls to the algorithm.
\paragraph{Qudits and the Pauli group} We also use standard quantum computing notation in our text. We briefly define the notations and definitions we use. 
Qudits are elements in a $q^n$-dimensional (complex) Hilbert space generated by the computational basis $\{\ket{i_1\ldots i_n} : i_j \in [q] \}$ 
over $\C$. Norm-one elements in this space, i.e., 
\[\ket{\psi} = \sum_{i\in B} a_i \ket{i}, \quad \quad \hbox{ with } \quad \sum_{i \in B} |a_i|^2=1\]
are called pure qudits. Operators on qudits are elements in $U_{q^n}(\C)$, i.e., $q^n \times q^n$ unitary matrices with entries in $\C$. Given a unitary operator $U$, we denote its inverse by $U^\dagger$. Let $\cI := \{e^{i \varphi} I\}_{\varphi \in [-\pi, \pi)}$ denote the subgroup of unitaries that are functionally equivalent to identity (where the dimension is clear from context). For a subgroup $\cG$ of unitaries, we let $\cI\cG$ be the product of the two subgroups. Among the unitary operators, the Pauli matrices define the first (and sufficient) building blocks for elements in $U_q(\C)$. 
We say sufficient, because any  
operator can be generated as a complex linear combination of the Pauli matrices (of the same rank). See Theorem~\ref{thm:Pauli_decomp}. 
Recall that the binary Pauli group is the set $\mathcal{P}_2^1= \{\w_4^cX^aZ^b | \ a, b \in \Z_2, \ c \in \Z_4\}$, where $\w_4$ is the primitive $4$-th root of unity, $X \ket{x}:=\ket{x+1\mod 2}$, $Z \ket{x}:= (-1)^{x}\ket{x}$, and $Y := iXZ$. These operators generalize to the $q$-ary set up. 
For an odd prime $q$, the $q$-ary Pauli group, acting on a system of one qudit, is the set 
\[
\mathcal{P}_q^1=\{\omega_q^c X_q^aZ_q^b\ | \ a,b,c\in \bbZ_q\}~.
\]
where, $X_q: \ket{x}\ra \ket{x+1 \mod q} \quad \hbox{and} \quad Z_q: \ket{x} \ra \w_q^x\ket{x}$,
with $\w_q$ is a primitive $q$-th root of unity.
The generators $X_q$ and $Z_q$ are of order $q$, i.e., $X_q^q = Z_q^q =I$ and satisfy $X_qZ_q = \w_q^{-1}Z_q X_q$. 
In a system of $n$ qudits, the computational basis of the $q^n$-dimensional complex Hilbert space is the set $\{\ket{i_1i_2\cdots i_q}: i_j \in \Z_q, \hbox{ for } 1\leq j\leq q \}$ and the group $\mathcal{P}_q^n$ of 
$q$-ary Pauli operators acting on $n$ qudits is defined as
\begin{eqnarray*}
\mathcal{P}^n_q &=& \{\w_q^r P_1\otimes ...\otimes P_n | \; r \in \Z_q, \ P_i \in \mathcal{P}_q^1, \; \hbox{ for } i \in [n]\}, \quad \hbox{ for an odd prime $q$ and}\\
\mathcal{P}^n_2 &=& \{\w_4^r P_1\otimes ...\otimes P_n | \; r \in \Z_4, \ P_i \in \mathcal{P}_2^1, \; \hbox{ for } i \in [n]\}, \quad \hbox{ when $q=2$.}
\end{eqnarray*}

\paragraph{Pauli decomposition} 
Let $M_{q^n}(\C)$ denote the $\C$-vector space of $q^n \times q^n$ square matrices with complex entries. We show that a subset of $\mathcal{P}_q^n$ forms a basis for $M_{q^n}(\C)$. Since the set of unitary matrices is a subset of $M_{q^n}(\C)$, it is in this sense that we called the Pauli group, a sufficient building block for the set of operators on qudits.
\begin{theorem}[folklore]\label{thm:Pauli_decomp}
The set $\mathcal{B} = \mathcal{B}_q^n:= \{ P_1 \otimes P_2 \otimes \cdots \otimes P_n: \; P_i = X_q^{j_i}Z_q^{k_i}, \; 0 \leq j_i, k_i \leq q-1, \; \forall i\}$ forms an orthonormal basis of $M_{q^n}(\C)$ w.r.t the 
inner product $\langle A, B \rangle := \ntr(A^\dagger B)$. Therefore, any $M \in M_{q^n}(\C)$ decomposes uniquely as $M = \sum_{i_j} m_{i_1, \cdots i_n} P_{i_1} \otimes \cdots \otimes P_{i_n}$, where $m_{i_1, \cdots i_n} = \langle \otimes_j P_{i_j}, M \rangle$.
\end{theorem}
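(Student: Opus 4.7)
The theorem has two pieces: the cardinality of $\mathcal{B}$ matches $\dim_{\C} M_{q^n}(\C) = q^{2n}$, and $\mathcal{B}$ is orthonormal under $\langle A, B\rangle := \ntr(A^\dagger B)$. Since orthonormality alone implies linear independence, and we have the right count, this yields both a basis claim and the uniqueness of the decomposition. The existence of the coefficients $m_{i_1,\ldots,i_n}$ and the formula $m_{i_1,\ldots,i_n} = \langle \otimes_j P_{i_j}, M\rangle$ then follow formally from orthonormal-basis expansion in an inner-product space. So the real content is orthonormality.

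The plan is to reduce orthonormality of $\mathcal{B}$ to a single-qudit statement via the tensor structure. For a single qudit I would first establish the key computation: for $a,b \in \bbZ_q$,
\[
\tr(X_q^a Z_q^b) \;=\; \begin{cases} q & \text{if } (a,b) = (0,0) \\ 0 & \text{otherwise.} \end{cases}
\]
This is immediate from the definitions: $X_q^a$ is a cyclic permutation matrix, so $X_q^a Z_q^b$ has a nonzero diagonal only when $a \equiv 0 \pmod q$, in which case it equals $Z_q^b$, whose trace is the geometric sum $\sum_{x \in \bbZ_q} \omega_q^{bx}$, which is $q$ for $b = 0$ and $0$ otherwise by the standard character-sum identity.

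Next I would lift to $n$ qudits using multiplicativity of the trace over tensor products: for any $A = \otimes_i A_i$ and $B = \otimes_i B_i$,
\[
\ntr(A^\dagger B) \;=\; \prod_{i=1}^{n} \ntr(A_i^\dagger B_i).
\]
So for two basis elements $P = \otimes_i X_q^{j_i} Z_q^{k_i}$ and $P' = \otimes_i X_q^{j'_i} Z_q^{k'_i}$, each factor is
\[
(X_q^{j_i} Z_q^{k_i})^\dagger (X_q^{j'_i} Z_q^{k'_i}) \;=\; Z_q^{-k_i} X_q^{j'_i - j_i} Z_q^{k'_i},
\]
which, using the commutation relation $X_q Z_q = \omega_q^{-1} Z_q X_q$, rearranges to $\omega_q^{c_i} X_q^{j'_i - j_i} Z_q^{k'_i - k_i}$ for some integer $c_i$. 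Applying the single-qudit trace formula to each factor, the product is zero unless $(j_i,k_i) = (j'_i,k'_i)$ for every $i$, i.e.\ $P = P'$; and when $P = P'$, $A^\dagger B = I$ with $\ntr(I) = 1$. This establishes $\langle P, P'\rangle = \delta_{P,P'}$.

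I do not expect a real obstacle here — the proof is essentially a structured use of character orthogonality. The only small care needed is in the generalized $q$-ary (in particular even) setting to track that the rearrangement phase $\omega_q^{c_i}$ does not affect the vanishing of the trace (it is just a nonzero scalar, so the trace is zero iff the trace of $X_q^{j'_i-j_i}Z_q^{k'_i-k_i}$ is zero). Once orthonormality and the dimension count are in hand, the decomposition formula $M = \sum_P \langle P, M\rangle\, P$ with $\langle P, M\rangle = \ntr(P^\dagger M)$ follows from standard finite-dimensional inner-product-space theory.
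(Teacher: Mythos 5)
Your proposal is correct and follows essentially the same route as the paper: establish the single-qudit trace identity $\ntr(X_q^a Z_q^b) = \delta_{a,0}\delta_{b,0}$ via the permutation-matrix and character-sum observations, lift to $n$ qudits through multiplicativity of trace over tensor products, and conclude via the dimension count. Your version is marginally cleaner in ordering (orthonormality first, then linear independence as a consequence) and spells out the commutation-relation rearrangement that the paper leaves implicit, but the content is identical.
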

\begin{proof}
Note that the $q^{2n}$ Pauli elements of the set $\mathcal{B}$ are $\C$-linearly independent. Since the cardinality of $\mathcal{B}$ equals the dimension of $M_{q^n}(\C)$, as a $\C$-vector space, the set $\mathcal{B}$ forms a basis of $M_{q^n}(\C)$. 

To show orthonormality of the elements of $\mathcal{B}$, we first show it for $n=1$. Note that $\ntr(X_q^aZ_q^b)=0$, if $a \neq 0$, as $X_q$ is a permutation matrix. Further, if $a = 0$ and $b \neq 0$, then the diagonal entries sum to $\sum_{i=0}^{q-1} w_q^i =0$. Therefore, unless $a=0=b$, i.e., unless $X_q^aZ_q^b = I$, the trace is always $0$. This implies that 
\[
\langle A, B \rangle = \ntr(A^\dagger B) =\begin{cases}
1 & A=B\\
0 &\text{otherwise}
\end{cases}
\]
In order to prove orthonormality for the general $n$, let $\otimes_{i=1}^nP_i$ and $\otimes_{i=1}^nQ_i$ be two distinct elements in $\mathcal{B}$. Using the following two well-known properties of the tensor product: $(A \otimes B)^\dagger = A^\dagger \otimes B^\dagger$, and $\tr(A\otimes B)= \tr(A) \cdot \tr(B)$, we conclude that 
\[\langle \otimes_i P_i, \; \otimes_i Q_i \rangle  =  \ntr((\otimes_i P_i)^\dagger (\otimes_i Q_i)) =\prod_i \ntr(P_i^\dagger Q_i)\]
which is non-zero, and equals $1$, if and only if $P_i = Q_i$, for all $i$. The expression for $M \in M_{q^n}(\C)$ and its uniqueness follows. 
\end{proof}

\paragraph{The $q$-ary Clifford group}
Similar to case of $q=2$, the $q$-ary Clifford Group is the normalizer of $\mathcal{P}_q^n$ in $U_{q^n}(\C)$, i.e., 
\[
\mathcal{C}^n_q=\left\{V\in U_{q^n} \ | \  V P V^{\dagger}\in \mathcal{P}^n_q, \ \ \forall P\in \mathcal{P}^n_q \right\}
\]
In the binary case, the Clifford group is generated by the unitary gates, $H \ket{x} = \frac{1}{\sqrt{2}}\sum_a \w_2^{ax}\ket{a}$, $CNOT\ket{xy} = \ket{x \ x+ y \mod 2}$ and the phase gate $S\ket{x} = \w_4^x \ket{x}$. 
Generalization of these gates to the $q$-ary setup, for an odd prime $q$, as $F_q$ (generalization of $H$), $CNOT_q$ and $S_q$ generate the $q$-ary Clifford group $\mathcal{C}_q^n$. Notice that while $F_q$ and $S_q$ are $1$-qudit gates and $CNOT_q$ is a $2$-qudit gate, one can tensor them with the identity operator in $n-1$ ($n-2$, resp.) remaining places so they lie in $\mathcal{C}_q^n$. It is these elements that form the generating set.
\begin{theorem}[{\cite[Cor.\ 7.12]{Clark_2006}}]\label{thm:clifgen} For an odd prime $q$, the $q$-ary Clifford group, $\mathcal{C}_q^n$, is generated by $\{ F_q, CNOT_q, S_q\}$, where $F_q\ket{x}= \frac{1}{\sqrt{q}}\sum_{a \in \Z_q}\w_q^{ax}\ket{a}$, $CNOT_q \ket{x,y}= \ket{x,x+y\mod q}$, and $S_q \ket{x}= \omega_q^{x(x-1)/2}\ket{x}$. 
\end{theorem}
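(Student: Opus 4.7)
The strategy is to use the standard symplectic representation of the Clifford group, reducing the statement to (i) the fact that the three gates realize a standard generating set of the symplectic group $Sp(2n,\Z_q)$, and (ii) that they additionally produce all Pauli operators modulo a global phase. In keeping with the paper's footnote convention, I work throughout modulo $\cI$ (global phases), which is consistent with the fact that the kernel of the conjugation action of $\mathcal{C}_q^n$ on $\mathcal{P}_q^n$ contains all scalar unitaries.

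\emph{Step 1 (symplectic homomorphism).} Conjugation by any $V \in \mathcal{C}_q^n$ fixes the center $\langle \omega_q I\rangle$ of $\mathcal{P}_q^n$ and hence descends to an automorphism of $\mathcal{P}_q^n/\langle \omega_q I\rangle \cong \Z_q^{2n}$ via the identification $\omega_q^c X_q^a Z_q^b \mapsto (a,b)$. From the commutation rule $X_q^a Z_q^b \cdot X_q^{a'} Z_q^{b'} = \omega_q^{a'b - a b'} X_q^{a'} Z_q^{b'} \cdot X_q^a Z_q^b$, this induced map preserves the standard symplectic form, giving a homomorphism $\varphi:\mathcal{C}_q^n\to Sp(2n,\Z_q)$. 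A standard argument --- using that the Paulis act irreducibly on $(\C^q)^{\otimes n}$, so by Schur's lemma any unitary commuting with all of them is scalar --- identifies $\ker\varphi$ with $\cI\cdot \mathcal{P}_q^n$. Hence, modulo $\cI$, it suffices to prove surjectivity of $\varphi$ restricted to $\langle F_q, CNOT_q, S_q\rangle$ and to produce every Pauli as a word in those three gates.

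\emph{Step 2 (generators of $Sp(2n,\Z_q)$).} Direct calculation gives $F_q X_q F_q^\dagger = Z_q$, $F_q Z_q F_q^\dagger = X_q^{-1}$, so $\varphi(F_q)$ is the Fourier swap $\left(\begin{smallmatrix}0&-1\\1&0\end{smallmatrix}\right)$ on a single symplectic plane. A similar calculation yields $S_q X_q S_q^\dagger = X_q Z_q$ and $S_q Z_q S_q^\dagger = Z_q$, so $\varphi(S_q)$ is the shear $\left(\begin{smallmatrix}1&0\\1&1\end{smallmatrix}\right)$; and $CNOT_q$ gives the expected elementary symplectic transformation coupling two adjacent planes. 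I would then invoke the classical generation theorem for $Sp(2n,\Z_q)$ over the prime field $\Z_q$: the symplectic group is generated by the single-plane Fourier swap and shear together with the two-plane $CNOT$-type transvections. If one wishes to be self-contained, this is proved by induction on $n$ using a symplectic Gauss/Bruhat elimination that peels off one hyperbolic plane at a time, normalizing entries via the invertibility of nonzero scalars in $\Z_q$.

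\emph{Step 3 (kernel: generating the Paulis modulo phase).} It remains to show that $X_q^{(i)}$ and $Z_q^{(i)}$ on any qudit $i$ lie (up to phase) in $\langle F_q, CNOT_q, S_q\rangle$. One obtains $Z_q$ as a suitable power of $S_q$ combined with phase-correction via $F_q S_q^k F_q^\dagger$, and then $X_q = F_q^\dagger Z_q F_q$ up to sign; single-qudit Paulis on site $i$ are moved to any other site by conjugation with sequences of $CNOT_q$'s. Combined with the surjectivity of $\varphi$, any $V\in\mathcal{C}_q^n$ can be multiplied by an explicit word to land in $\cI$, proving the theorem.

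The main obstacle is Step 2: showing that these specific $\varphi(F_q)$, $\varphi(CNOT_q)$, $\varphi(S_q)$ generate $Sp(2n,\Z_q)$ for every $n$, not just that they lie in it. Each of $F_q, S_q$ acts on only one qudit, so the $CNOT_q$'s (appropriately conjugated by $F_q$'s and $S_q$'s to convert $X$-type transvections into $Z$-type and vice versa) must do all the work of coupling distant planes. Carrying out the inductive elimination while keeping track of signs and phases --- and making genuine use of $\Z_q$ being a field, so that any nonzero entry can be normalized --- is where the primality of $q$ enters essentially and is the technically most delicate part of the argument.
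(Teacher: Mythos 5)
The paper does not prove Theorem~\ref{thm:clifgen}; it is imported verbatim from \cite[Cor.~7.12]{Clark_2006}, so there is no in-paper proof to compare against. That said, your reconstruction via the symplectic representation is indeed the standard route (and essentially what Clark does), and it is structurally sound: the homomorphism $\varphi:\mathcal{C}^n_q\to Sp(2n,\Z_q)$ with $\ker\varphi=\cI\mathcal{P}^n_q$ reduces the problem to (a) surjectivity of $\varphi$ on $\langle F_q,CNOT_q,S_q\rangle$ and (b) producing the Paulis (mod $\cI$) from the three gates. Your conjugation computations $F_qX_qF_q^\dagger=Z_q$, $F_qZ_qF_q^\dagger=X_q^{-1}$, $S_qX_qS_q^\dagger=X_qZ_q$, $S_qZ_qS_q^\dagger=Z_q$ are correct, and the appeal to classical elementary-matrix/transvection generation of $Sp(2n,\Z_q)$ over a prime field for Step~2 is the right lemma to invoke.

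One concrete inaccuracy in Step~3: $Z_q$ is \emph{not} a power of $S_q$ (already for $q=3$, $S_q=\mathrm{diag}(1,1,\omega)$ while $Z_q=\mathrm{diag}(1,\omega,\omega^2)$, and no power matches), nor is $F_qS_q^kF_q^\dagger$ diagonal, so ``phase-correction'' as you describe it doesn't quite parse. The clean identity, using $F_q^2\ket{x}=\ket{-x}$, is
\[
Z_q \;=\; \bigl(F_q^{2}\,S_q\,F_q^{-2}\bigr)\,S_q^{-1},
\]
since the left factor is $\mathrm{diag}\!\left(\omega_q^{x(x+1)/2}\right)$ and the right is $\mathrm{diag}\!\left(\omega_q^{-x(x-1)/2}\right)$, giving $\mathrm{diag}(\omega_q^{x})=Z_q$. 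Then $X_q=F_q^\dagger Z_q F_q$, and $CNOT_q$ conjugation moves these to arbitrary sites, so the kernel $\cI\mathcal{P}^n_q$ is indeed covered. One cannot shortcut this step by an irreducibility argument alone: for odd $q$ the extension $1\to\Z_q^{2n}\to\mathcal{C}^n_q/\cI\to Sp(2n,\Z_q)\to 1$ splits, so a subgroup surjecting onto $Sp(2n,\Z_q)$ need not meet the kernel nontrivially --- the explicit word above (or an equivalent) is genuinely needed. With that correction, and with the classical $Sp(2n,\Z_q)$ generation theorem cited rather than proved, your outline is a faithful reconstruction.
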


The following is an immediate consequence of the definition of the set of generators.
\begin{theorem}[{\cite[Section 4.3]{Farinholt}}]\label{thm:clifconjugate}
Let $C$ be a quantum circuit consisting of $\{ F_q, CNOT_q, S_q\}$ gates and acting on $n$ $q$-ary qudits. Then, given the circuit $C$, there exist \emph{efficiently computable} polynomials $f,g,h$
s.t.\ for all $a,b \in \bbZ_q^n$ it holds that
\begin{align*}
    C X^{a}Z^{b} C^{\dagger} = \omega_{q'}^{h(a,b)} X^{f(a,b)}Z^{g(a,b)}~,
\end{align*}
where $q'=q$ if $q$ is an odd prime, and $q'=4$ if $q=2$.
Furthermore, $f,g$ are linear (have total degree at most $1$) and $h$ is quadratic (has total degree at most $2$). 
\end{theorem}

\paragraph{Distance measures} In order to check for closeness between unitary operators, one needs a distance measure. Here, we define the ones we use. For a detailed discussion on distance measures, see \cite[Section 5.1.1]{montanaro2018survey}. 

For any $d\times d$ square matrix $A$, we denote its trace by $\tr(A)$ and its \textit{normalized} trace by $\ntr(A):= \tr(A)/d$.  For convenience, in the rest of the text, if $M$ is a matrix and $v$ is an $n$-dimensional vector, we use the shorthand $M^v$ to denote $\otimes_{i=1}^n M^{v_i}$. To measure the distance between two unitary operators, $U$, $V$, we consider the following two distance measures.
\begin{itemize}
    \item $D^{op}(U,V):= \norm{U-V}_{op}$, where $\norm{\cdot}_{op}$ denotes the operator norm. Recall that, for a matrix $A$, the operator norm is defined as $\norm{A}_{op} = \inf\{c\ge 0 \text{ : $\norm{Av}\le c \norm{v}$, for all $v\in \bbC^{d}$}\}$.
    \item $D^{tr}(U,V) = \sqrt{1-\abs{\ntr(UV^{\dagger})}^2}$, where $\ntr(U)$ denotes the normalized trace of $U$.
\end{itemize}

Finally, we state the following useful fact - 
the operator norm of a matrix upper bounds the normalized trace. This is a special case of Hölder's inequality.

\begin{lemma} \label{lem:matrixnorms}
For a square matrix $A$, $\abs{\ntr(A)}\le \norm{A}_{op}$.
\end{lemma}

\subsection{The EPR Identity Test}
\label{sec:prelim:epridtest}

We consider the following test for identity. This test is derived from the Choi-Jamio\l{}kowski isomorphism as was previously done in \cite[Section C-1]{wangPropertyTesting}. In this paper,  
we refer to this test as the ``EPR-Identity-Testing''. We show in this work that this test has good soundness over certain subgroups of unitaries.

\begin{test}[EPR Identity-Test]\label{test:eprtest} Given query access to a black-box implementing a quantum unitary operator $U$ over $n$ qudits, prepare the state $\ket{e}:=\frac{1}{q^{n/2}} \sum_{x \in \bbF_q^n} \ket{x}\ket{x}$ over two $n$-dimensional registers $A$ and $B$. Call $U$ on the $A$ register and perform the projective measurement $\ket{e}\bra{e}$ on the registers $AB$.
\end{test}

The test is efficiently implementable; generating the state $\ket{e}$ requires a linear number of $q$-ary Clifford operators since $\ket{e} = (CNOT_q (F_q \otimes I) \ket{00}_{AB})^{\otimes n}$. Projecting onto $\ket{e}$ is therefore also efficient.
The test makes a single query to the unitary $U$. The following lemma relates the acceptance probability of the EPR test on $U$ to the trace of $U$. 
\begin{lemma}\label{lem:eprtest}
The acceptance probability of the EPR-Identity-Test on a unitary $U$ is $\abs{\ntr(U)}^2$.
\end{lemma}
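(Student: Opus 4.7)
The plan is a direct computation of the amplitude of the post-measurement state. After the black box is invoked on register $A$, the global state becomes $(U \otimes I)\ket{e}$, and since $\ket{e}\bra{e}$ is a rank-one projector, the acceptance probability of the measurement equals $|\bra{e}(U\otimes I)\ket{e}|^2$. So the lemma reduces to showing that this single matrix element equals $\ntr(U)$.

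To evaluate the matrix element, I would expand $\ket{e}=\frac{1}{q^{n/2}}\sum_x \ket{x}\ket{x}$ on both sides and compute
\[
\bra{e}(U\otimes I)\ket{e} \;=\; \frac{1}{q^n}\sum_{x,y\in\bbF_q^n} \bra{y}U\ket{x}\,\braket{y}{x} \;=\; \frac{1}{q^n}\sum_{x\in\bbF_q^n}\bra{x}U\ket{x} \;=\; \frac{\tr(U)}{q^n} \;=\; \ntr(U),
\]
where the second equality uses the orthonormality of the computational basis on register $B$ (i.e., $\braket{y}{x}=\delta_{x,y}$), and the last equality uses that the ambient Hilbert space has dimension $q^n$, which is exactly the normalization in $\ntr$. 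Squaring the absolute value yields $|\ntr(U)|^2$.

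There is no real obstacle here; the calculation is the standard Choi--Jamio\l{}kowski identity $(I\otimes A^T)\ket{e} = (A\otimes I)\ket{e}$ specialized to extracting a trace via an inner product with $\ket{e}$. The only care needed is to keep the dimension factor $q^n$ consistent between the definition of the maximally entangled state and the normalization in $\ntr$, which is why the $q$-ary case poses no additional difficulty compared to the qubit case already treated in \cite{wangPropertyTesting}.
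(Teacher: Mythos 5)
Your proof is correct and follows the same route as the paper: reduce the acceptance probability to $|\bra{e}(U\otimes I)\ket{e}|^2$ and compute that matrix element directly. The paper leaves the final computation as a "simple computation," while you spell it out explicitly, but the argument is identical.
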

\begin{proof} Calling $U$ on $A$ is equivalent to applying $U\otimes I$ on $\ket{e}$. Therefore, by definition, the probability that the test accepts is $\abs{\bra{e} (U \otimes I)\ket{e}}^2$. It follows from a simple computation that this expression equals $\abs{\ntr(U)}^2$. 
\end{proof}

\section{The commutator of Pauli and non-Pauli matrices}\label{sec:commute} 
A unitary operator $C$ acts on the group of unitaries $U_{q^n}(\C)$, via conjugation. In this section, we restrict the subgroup to be acted on, to the Pauli group and discuss the behaviour of the image set under the conjugation action by $C$. 

Recall that it is the Pauli group that fixes itself under the conjugation action. That is to say that the Pauli group is the biggest subgroup of the unitaries to commute with the Paulis. In fact, if $C$ is a non-Pauli unitary, then at most $\frac{1}{q}$-th of the Paulis maybe fixed under the conjugation action of $C$, as we show in Lemma~\ref{lem:pauli_commute} below. In Lemma~\ref{lem:approximate_commutator}, we analyze the non-fixed points under this action. 
We show that the distance of $C$ from the Pauli group $\cP_q^n$, in the measure $D^{tr}$, plays a role in determining the distance of the image $CPC^{\dagger}$ from $P$, for a fraction of the Paulis $P \in \cP_q^n$. We believe that these simple and insightful results maybe of interest, independent, to this work as well. 
\begin{lemma}\label{lem:pauli_commute}
Let $C\in GL_{q^n}(\C)\setminus \mathcal{P}_q^n$. Then, 
\[Pr_{P\in \cP_q^n}\left([C,P] :=C^{\dagger}P^{\dagger}CP \in \cI\right) \le \frac{1}{q}.
\]
\end{lemma}

\begin{proof}
Let $P=X^x Z^z$ be a Pauli element in $\mathcal{B}$, and let $C= \sum_{a,b}\alpha_{{a},{b}} \; X^{a} Z^{b}$ be the Pauli decomposition of $C$. See Theorem~\ref{thm:Pauli_decomp}. Recall that $x,z,a,b$ are all vectors in $\bbZ_q^n$. Then, the commutation relations of the Pauli generators imply that
\begin{align*}
    P^\dagger CP = \sum_{{a},{b}}\alpha_{{a},{b}}\omega_q^{\langle b, x \rangle - \langle a, z \rangle}X^{a}Z^{b}~.
\end{align*}
It holds that $[C,P] \in \cI$ if and only if $P^\dagger CP = C$, up to a global phase. Since the Pauli decomposition is unique, this means that for all $a,b$ for which $\alpha_{a,b} \neq 0$ it holds that $\langle b, x \rangle - \langle a, z \rangle = \gamma \pmod{q}$ for some global constant $\gamma$. Note that this is only possible if $\gamma \in \bbZ_q$. Namely, we get a homogeneous linear equation in the $2n+1$ variables $x,z,\gamma$ as $\langle (x, z, \gamma), (b,-a,-1) \rangle=0\pmod{q}$.

Since $C$ is not Pauli, it has at least two nonzero $\alpha_{a,b}$ values. Due to the $(-1)$ entry, the vectors $\{(-a,b,-1)\}_{a,b \in \bbZ_q^n}$ are pairwise linearly independent. Therefore, the rank of this system of equations is at least $2$ and the probability that a random assignment satisfies it is at most $q^{-2}$. Taking a union bound over all possible entries of $\gamma$, the lemma follows.
\end{proof}

\begin{lemma} \label{lem:approximate_commutator}
Let $U\in U_{q^n}(\bbC)$ such that $D^{tr}(U,\cP)\ge \delta$ for some $\delta > 0$. Then,
\[
\Pr_{P\in\cP} \left(D^{tr}([U,P],\cI)\ge \frac{\delta^2}{2} \right) \ge \frac{\delta^2}{2}
\]
\end{lemma}

\begin{proof}
Let $U = \sum_{x,z}v_{x,z}X^xZ^z$ be the Pauli decomposition of $U$. 
The assumption on the distance of $U$ from $\cP$ implies that
\[\delta^2 \leq (D^{tr}(U,\cP))^2 = \min_{a,b}\left(1-\abs{\ntr(U(X^aZ^b)^{-1})}^2
\right) = 1 - \max_{a,b} \abs{v_{a,b}}^2.\]
Let $\epsilon_{a,b}^2 := D^{tr}([U,X^aZ^b],\cI)^2$.
Then, the expectation
\begin{eqnarray}
\bbE_{a,b}(1-\epsilon_{a,b}^2) &=& \bbE_{a,b} \left(\abs{\ntr([U,X^{a}Z^b])}^2\right) \\
&=& \bbE_{a,b}\left(\abs{\sum_{x,z} \abs{v_{x,z}}^2\cdot \omega_q^{az-bx}}^2\right)\\
&=& \label{distribution} \bbE_{a,b}\left(\abs{\bbE_{x,z}\; \w_q^{az-bx}}^2\right)\\
&=& \label{changing_Variables} \bbE_{a,b} \left(\bbE_{x,x',z,z'} \w_q^{az-bx-az'+bx'} \right) \\
&=& \label{sum_roots}\sum_{x=x',z=z'} \abs{v_{x,z}}^4 
\\
&\leq & \left(\max_{x,z} \abs{v_{x,z}}^2\right) \cdot \sum_{x,z} \abs{v_{x,z}}^2 \leq 1- \delta^2.
\end{eqnarray}
In equation~(\ref{distribution}), we view $\sum_{x,z} \abs{v_{x,z}}^2\cdot \omega_q^{az-bx}$ as the expectation of a distribution over the $q$-th roots of unity. In equation~(\ref{changing_Variables}), we treat $\abs{\bbE_{x,z}\; \w_q^{az-bx}}^2$ as multiplication of two expectation over independent variables and use the fact that $\sum_i \omega_q^i=0$, to derive equation~(\ref{sum_roots}). We conclude the proof by noting that the above implies that $\bbE(\epsilon_{a,b}^2)\ge \delta^2$. Therefore, by Markov's, we have that $\Pr_{P\in\cP} \left(D([U,P],\cI)\ge \frac{\delta^2}{2} \right) \ge \frac{\delta^2}{2}$.

\end{proof}

\section{Traces of Pauli and Clifford Operators are Discrete}
\label{sec:traces}

In this section we show that the trace of a Pauli or a Clifford operator over prime qudits cannot take arbitrary values. Rather, it can only take one of a few discrete values. This is in contrast to general unitaries which can have arbitrary trace (e.g.\ single-qubit rotation matrix). 
We start with a general claim about the trace of identity.
\begin{lemma}\label{lem:trid}
Let $U$ be a unitary operator over $n$ $q$-ary qudits. Then, $\abs{\ntr(U)}\le 1$, and equality holds 
if and only if $U \in \cI$.
\end{lemma}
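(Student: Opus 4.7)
The plan is to use the spectral decomposition of the unitary operator $U$ and apply the triangle inequality to the sum of its eigenvalues. Let $d = q^n$ be the dimension of the Hilbert space. Since $U$ is unitary, it is normal, so by the spectral theorem it is diagonalizable with an orthonormal basis of eigenvectors, and its eigenvalues $\lambda_1, \ldots, \lambda_d$ all satisfy $|\lambda_i| = 1$. Then $\tr(U) = \sum_{i=1}^{d} \lambda_i$, so by the triangle inequality
\[
|\tr(U)| \le \sum_{i=1}^{d} |\lambda_i| = d,
\]
which immediately gives $|\ntr(U)| = |\tr(U)|/d \le 1$.

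Next, I would characterize the equality case via the equality condition in the triangle inequality for complex numbers: $|\sum_i z_i| = \sum_i |z_i|$ (with all $z_i \neq 0$) holds iff all $z_i$ have the same argument, i.e.\ are positive real multiples of a common unit complex number. Applied to $\lambda_1, \ldots, \lambda_d$, which are already unit complex numbers, this forces $\lambda_1 = \lambda_2 = \cdots = \lambda_d = e^{i\varphi}$ for some $\varphi \in [-\pi,\pi)$. Because $U$ is diagonalizable and all its eigenvalues are equal to $e^{i\varphi}$, the operator $U$ equals $e^{i\varphi} I$, hence $U \in \cI$.

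For the converse direction, if $U \in \cI$ then $U = e^{i\varphi} I$ for some $\varphi$, so $\tr(U) = e^{i\varphi} d$ and $|\ntr(U)| = 1$. Combining the two directions yields the lemma.

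There is no significant obstacle here: the argument is a direct application of the spectral theorem and the triangle inequality. The only subtlety worth stating carefully is the use of diagonalizability in the equality case, which prevents non-scalar unitaries (e.g.\ ones with repeated eigenvalues but non-trivial Jordan-like structure) from achieving equality — and this is ruled out precisely because unitaries are always diagonalizable.
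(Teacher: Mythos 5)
Your proof is correct and follows essentially the same route as the paper's: express the trace as the sum of the (unit-modulus) eigenvalues, apply the triangle inequality to get the bound, and invoke the equality condition of the triangle inequality together with diagonalizability of unitaries to force $U = e^{i\varphi}I$. Your writeup is slightly more explicit about diagonalizability and the converse direction, but the argument is the same.
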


\begin{proof}
Recall that $U$ is square matrix of dimension $q^n$, and that the trace is invariant under basis change. Therefore, $\tr(U)=\sum_x \lambda_x$, where $\lambda_x$ denotes the eigenvalues of $U$. Since $U$ is unitary, $\abs{\lambda_x}=1$, for all $x$. We now use triangle inequality to conclude that 
\[\abs{\tr(U)} = \big| {\sum_x \lambda_x} \big| \le \sum_x \abs{\lambda_x} = q^n~.\]
Equality holds only when all $\lambda_x$'s are equal, i.e., when 
$U=\lambda I$, where $\abs{\lambda}=1$.
\end{proof}

\subsection{Trace of Pauli Matrices}

In the case of Pauli operators, all non-identity group elements 
have zero trace.
\begin{lemma}[Trace of Pauli is Zero/One]\label{lem:trPauli}
Let $P \in \cI \cP$, i.e.,\  $P$ is equivalent to a Pauli operator up to a global phase. Then, 
$\abs{\ntr(P)} = 1$ if and only if $P \in \cI$, and $\abs{\ntr(P)} = 0$, otherwise.
\end{lemma}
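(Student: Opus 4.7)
The plan is to reduce the claim to a direct computation of the trace of tensor products of $X_q^{a_i}Z_q^{b_i}$, which was essentially already carried out inside the proof of Theorem~\ref{thm:Pauli_decomp}.

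First, I would unfold the definition: any $P \in \cI\cP^n_q$ can be written as $P = \omega \cdot (X^{a_1}Z^{b_1}) \otimes \cdots \otimes (X^{a_n}Z^{b_n})$ for some scalar $\omega \in \C$ with $|\omega|=1$ (absorbing both the Pauli phase $\omega_q^r$ or $\omega_4^r$ and the $\cI$ phase $e^{i\varphi}$) and some exponent vectors $a,b \in \bbZ_q^n$. Because tensor products and the trace satisfy $\tr(A\otimes B) = \tr(A)\cdot \tr(B)$, and because the normalized trace factors as $\ntr(\otimes_i M_i) = \prod_i \ntr(M_i)$, we get
\[
|\ntr(P)| \;=\; |\omega|\cdot \prod_{i=1}^n |\ntr(X_q^{a_i}Z_q^{b_i})| \;=\; \prod_{i=1}^n |\ntr(X_q^{a_i}Z_q^{b_i})|.
\]

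Next I would handle the single-qudit factors, exactly as in the proof of Theorem~\ref{thm:Pauli_decomp}. If $a_i \neq 0 \pmod q$, then $X_q^{a_i}$ is a nontrivial cyclic permutation matrix, so $X_q^{a_i}Z_q^{b_i}$ has all diagonal entries equal to $0$ and its trace vanishes. If $a_i = 0$ but $b_i \neq 0 \pmod q$, then $Z_q^{b_i}$ is diagonal with entries $\omega_q^{b_i k}$ for $k = 0,\ldots,q-1$, and the trace is the geometric sum $\sum_{k=0}^{q-1}\omega_q^{b_i k} = 0$ (since $b_i \not\equiv 0 \pmod q$ makes $\omega_q^{b_i}$ a nontrivial $q$-th root of unity). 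Finally, if $a_i = b_i = 0$, the factor is the single-qudit identity and $\ntr(I) = 1$.

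Combining, the product $\prod_i |\ntr(X_q^{a_i}Z_q^{b_i})|$ equals $1$ when $(a,b) = (0,0)$, and equals $0$ as soon as any coordinate pair $(a_i,b_i)$ is nonzero. The first case corresponds exactly to $P = \omega\cdot I \in \cI$, yielding $|\ntr(P)| = 1$; the second case is the remaining one, where $P \notin \cI$ and $|\ntr(P)| = 0$. I do not expect a real obstacle here: the only thing to be careful about is tracking that the global phase $\omega$ (which for $q=2$ uses $\omega_4$ and for odd prime $q$ uses $\omega_q$) has modulus one, so it drops out of $|\ntr(P)|$ and the claim reduces to the purely combinatorial trace computation above.
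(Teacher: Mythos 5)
Your proposal is correct and follows essentially the same route as the paper: the paper's one-line proof observes that non-identity single-qudit Paulis have trace zero and that the trace is multiplicative over tensor factors, which is exactly the computation you spell out (and which also appears inside the proof of Theorem~\ref{thm:Pauli_decomp}). Your added care about absorbing the global phase into a unit-modulus scalar is fine but does not change the argument.
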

\begin{proof}
This follows from the fact that all non-identity single-qudit Paulis have trace $0$. Since the Pauli group is a tensor of single-qudit Paulis, the result follows.
\end{proof}

\subsection{Trace of Clifford Matrices}

We show that for any prime $q$, the trace of a $q$-ary Clifford unitary must be a power of $\sqrt{q}$, or $0$. 
\begin{theorem}[\textit{Cliffords have Discrete Trace}]\label{thm:cliffordtraceoddq}
For a Clifford unitary $C$ over $n$ 
qudits, it holds that $\abs{\tr(C)}$ (and thus also $\abs{\ntr(C)}$) are either $0$ or a power of $\sqrt{q}$.
\end{theorem}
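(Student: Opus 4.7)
The plan is to use the sum-over-paths representation of Proposition~\ref{prop:clifequation} to express $\tr(C)$ as a normalized quadratic Gauss sum over $\bbF_q^m$, where $m = n + k$, and then directly evaluate $|\tr(C)|^2$. The absolute value squared will reduce, via a standard "shift" substitution, to counting the elements of the kernel of the bilinear form associated to the quadratic phase. Since such a kernel always has cardinality a power of $q$, the theorem will follow (for the normalized trace, just divide by $q^n$).

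For an odd prime $q$, taking the diagonal of Eq.~\eqref{eq:phasepolq} yields
\[
\tr(C) \;=\; \frac{1}{q^{m/2}} \sum_{u \in \bbF_q^m} \w_q^{H(u)}, \qquad u := (x,v),\ \ H(u) := h(x,v,x),
\]
where $H$ is quadratic. Since $q$ is odd I may write $H(u) = u^T A u + b^T u + c$ with $A$ symmetric over $\bbF_q$. I then expand
\[
|\tr(C)|^2 \;=\; q^{-m}\!\! \sum_{u,w \in \bbF_q^m} \!\!\w_q^{H(u+w) - H(u)} \;=\; q^{-m}\!\! \sum_{u,w} \w_q^{2 u^T A w + w^T A w + b^T w}
\]
and sum over $u$ first: the inner sum $\sum_u \w_q^{2 u^T A w}$ equals $q^m$ exactly when $Aw = 0$ (using invertibility of $2$) and vanishes otherwise. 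Since $w^T A w = 0$ on $\ker A$, the expression collapses to
\[
|\tr(C)|^2 \;=\; \sum_{w \in \ker A} \w_q^{b^T w},
\]
which, by character orthogonality, equals $|\ker A|$ (a power of $q$) when $b$ vanishes on $\ker A$, and $0$ otherwise.

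For $q = 2$ the same template applies, but modular-$4$ arithmetic is delicate: $2$ is not invertible mod $4$, and there is no symmetric "half" of the quadratic form over $\bbF_2$. Here I would invoke Proposition~\ref{prop:respectpol} to write $\tr(C) = 2^{-m/2} \sum_u \w_4^{H'(u)}$ with $H'$ respectful, i.e.,\ a $\bbZ$-linear combination of squared variables plus twice an arbitrary quadratic. This structure is exactly what makes the expansion $H'(u \oplus w) - H'(u) \pmod 4$ tractable: the linear-in-squares piece produces a well-defined bilinear cross-term in $(u,w)$ coupled via a matrix over $\bbF_2$, while the doubled quadratic collapses to a $\pm 1$ sign. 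Summing over $u$ first again imposes a linear kernel condition on $w$, and $|\tr(C)|^2$ comes out to the size of that kernel (a power of $2$) or $0$.

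The main obstacle is the binary case. The odd-$q$ argument rests on two features — the existence of a symmetric matrix $A$ representing $H$, and the invertibility of $2$ — neither of which is directly available over $\bbF_2$. The respectful polynomial formalism of Definition~\ref{def:respectpol} was introduced precisely to sidestep this: it supplies a canonical mod-$4$ representative whose $u \mapsto u \oplus w$ expansion reproduces the necessary bilinear structure, after which the kernel-counting step runs in close analogy to the odd-$q$ case. Carrying out this bookkeeping carefully (keeping the quadratic and linear-in-squares contributions separate, and reducing each modulo its correct modulus) is the technical heart of the proof.
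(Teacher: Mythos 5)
Your odd-$q$ argument is correct and takes a genuinely different route from the paper's. The paper's Lemma~\ref{lem:clifq} diagonalizes the quadratic form and appeals to the Lidl--Niederreiter evaluation of single-variable quadratic Gauss sums, which yields the full value $q^{k/2}\cdot c\cdot z$; you instead square the sum, shift $u\mapsto u+w$, and read off $|\tr(C)|^2$ directly as a kernel count (after the character-orthogonality check on $b\vert_{\ker A}$). Your route gives only the modulus, which is all the theorem claims, and it avoids any reference to the classification of quadratic forms. This is a clean and self-contained alternative for odd $q$.

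The $q=2$ case, however, has a genuine gap, and it is more than bookkeeping. Write the respectful phase as $H'(u)=\sum_i c_iu_i^2+2g(u)$ with $g(u)=u^TBu+d^Tu+e$. For $u,w\in\binset^m$ one finds, modulo $4$,
\[
H'(u\oplus w)-H'(u)\equiv 2\,u^T M w + Q(w),\qquad M=D(c)+B+B^T,\quad Q(w)=\sum_i c_iw_i^2+2\bigl(w^TBw+d^Tw\bigr),
\]
where $D(c)$ denotes the diagonal matrix with entries $c_i$. Summing over $u$ indeed gives $2^m\cdot\ind[Mw\equiv 0\bmod 2]$, but the leftover is
\[
|S|^2 = 2^m\sum_{w\,:\,Mw\equiv 0\ (\mathrm{mod}\ 2)}\omega_4^{Q(w)},
\]
and your claim that this "comes out to the size of that kernel or $0$" is false as stated: $Q$ is itself a respectful quadratic whose linear-in-squares part $\sum_i c_iw_i^2$ is \emph{not} doubled, so $\omega_4^{Q(w)}$ ranges over all of $\{\pm 1,\pm i\}$, not $\{\pm 1\}$, and the restricted sum is not resolved by character orthogonality. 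You have reduced one respectful exponential sum to another over a subspace, not to a count. The reduction can in fact be salvaged --- the residual sum is a Gaussian integer forced by $=|S|^2/2^m$ to be a real non-negative ordinary integer, one may induct on the number of variables, and the degenerate case $M\equiv 0\pmod 2$ (all $c_i$ even) must be handled directly via the identity $|S|^2=2^m(-1)^eS$ --- but these are substantive steps absent from your sketch, not routine bookkeeping. The paper's Lemma~\ref{lem:clif2} sidesteps the $|\cdot|^2$ trick entirely: it inducts on the number of variables, peeling off $x_n$ via Claim~\ref{claim:linearcase} when $c_n=0$ or Lemma~\ref{lem:squarecase} when $c_n=\pm 1$, with each peel contributing a clean factor of $2$ or $\sqrt{2}\zeta$ while keeping the remaining exponent respectful.
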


The following is an immediate corollary of Theorem~\ref{thm:cliffordtraceoddq}, combined with Lemma~\ref{lem:trid}.
\begin{corollary}\label{cor:trclif}
Let $C \in \cI \cC$, i.e.,\  $C$ is equivalent to a Clifford operator up to a global phase. Then $\abs{\ntr(C)} = 1$ if and only if $C \in \cI$, and $\abs{\ntr(C)} \le 1/\sqrt{q}$, otherwise.
\end{corollary}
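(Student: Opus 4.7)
The plan is to chain Theorem~\ref{thm:cliffordtraceoddq} with Lemma~\ref{lem:trid} directly, since the corollary just packages their joint consequence. First I would strip the global phase: any $C \in \cI\cC$ factors as $C = e^{i\varphi}C'$ with $C' \in \cC$, and both $\abs{\ntr(\cdot)}$ and membership in $\cI$ are insensitive to that overall phase, so it suffices to prove the bound for an honest Clifford $C'$.

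Next, I would apply Theorem~\ref{thm:cliffordtraceoddq} to $C'$ to get $\abs{\tr(C')} \in \{0\} \cup \{q^{k/2} : k \in \bbZ_{\geq 0}\}$. Dividing by the Hilbert space dimension $q^n$ yields $\abs{\ntr(C')} \in \{0\} \cup \{q^{(k-2n)/2} : k \in \bbZ_{\geq 0}\}$. Lemma~\ref{lem:trid} now forces $\abs{\ntr(C')} \leq 1$, pinning $k \leq 2n$, so $\abs{\ntr(C')}$ is confined to the discrete ladder $\{0,\, q^{-n},\, q^{-n+1/2},\, \ldots,\, q^{-1/2},\, 1\}$, whose top rung is $1$ and whose next-highest rung is exactly $q^{-1/2} = 1/\sqrt{q}$.

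Finally, I would read off the two cases. By Lemma~\ref{lem:trid}, the top rung $\abs{\ntr(C')} = 1$ is attained if and only if $C' \in \cI$, and any other $C'$ is bumped down at least one rung, giving $\abs{\ntr(C)} = \abs{\ntr(C')} \leq 1/\sqrt{q}$. There is really no obstacle here: both ingredients are already established, and the only care required is tracking the global phase at the start and confirming that the top of the discrete ladder is truly $1$, so that the dichotomy of the corollary is cleanly exhaustive.
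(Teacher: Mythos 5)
Your proof is correct and is exactly the argument the paper has in mind: combine Theorem~\ref{thm:cliffordtraceoddq} (discrete trace values) with Lemma~\ref{lem:trid} (normalized trace at most $1$, with equality iff in $\cI$), after stripping the global phase. The paper presents the corollary as immediate from those two results; you have simply filled in the normalization and the observation that the next admissible value below $1$ in the discrete set $\{q^{\ell/2}\}$ is $1/\sqrt{q}$, which is the only step requiring a moment's thought.
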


To prove Theorem~\ref{thm:cliffordtraceoddq}, we use certain results 
on quadratic exponential sums and argue the desired claims using modular arithmetic. We analyze the case of $q=2$ separately from that of $q$ being an odd prime. The proof is a novel contribution of this work. However, it is independent of the results in the following sections and also somewhat lengthy, we include it in Appendix~\ref{apx:traces}.

\section{The Subgroup Testing Problem}\label{sec:subgroup}

\def\itp{\mathsf{ITP}}
\def\ptp{\mathsf{PTP}}
\def\ctp{\mathsf{CTP}}

Given a subgroup $\cG$ of the unitary group and a quantum unitary operator $C$, we consider the problem of testing, whether $C$ implements a unitary that is ``close'' to $\cG$, or ``far'' from it, according to a specified distance measure $D$. We also consider a promise version of this problem, wherein, given subgroups $\cG \subseteq \cG'$ of the unitary group, and a circuit $C$ with a promise that $C$ is either ``close'' or ``far'' from $\cG'$, we decide whether $C$ is ``close'' or ``far'' from the smaller subgroup $\cG$. Formal definitions follow.

\begin{definition} [$(\cG_{a,b},D)$-testing]
Let $\cG$ be a subgroup of $U_{q^n}(\C)$. Given a quantum unitary operator $C$ that acts on $n$ qudits and a distance measure $D$, the problem of $(\cG_{a,b},D)$-testing is to decide whether $C$ satisfies $D(C, \cG) \leq a$ (YES) or $D(C, \cG) \geq b$ (NO), 
assumed one of these to be the case. 
\end{definition}

\begin{definition} [$(\cG_{a,b},\cG'_{a',b'},D)$-testing]
Let $\cG \subseteq \cG'$ be subgroups of $U_{q^n}(\C)$. Given a quantum unitary operator $C$ that acts on $n$ qudits with the promise that $C$ satisfies $(\cG'_{a',b'},D)$-testing with a distance measure $D$, the problem of $(\cG_{a,b},\cG'_{a',b'},D)$-testing is to decide whether $C$ satisfies $D(C, \cG) \leq a$ (YES) or $D(C, \cG) \geq b$ (NO),
assumed one of these to be the case.
\end{definition}
We specialize to the case where $a=0$ and the parameter $b$ is the whole interval $(0,1]$ and call them the exact problems. Observe that this version is independent of any distance measure. Formal definitions follow.
\begin{definition}[Exact $\cG$-testing and $(\cG,\cG')$-testing problems]\label{def:exact_problem} Let $\cG$, $\cG'$ be a subgroup of $U_{q^n}(\C)$ such that $\cG \subseteq \cG'$. Given a quantum unitary operator $C$ that acts on $n$ qudits, the exact $\cG$-testing problem is the problem of deciding, whether $C$ implements a unitary that belong to $\cG$.

The promise $(\cG,\cG')$-testing problem is the problem of $\cG$-testing, under the promise that $C$ implements a unitary from the bigger group $\cG'$.
\end{definition}
In the setup of \textit{property testing}, the parameter $a$ equals $0$, whereas $b$ is a fixed real number in the interval $(0,1]$. 
The most general version, i.e., $0 < a \leq b$ is called \textit{tolerant testing}.

In the following result, we show how, in either of the setups - exact, property or tolerant testing, the promise $(\cG,\cG')$-testing problem can be used to reduce $\cG$-testing to $\cG'$-testing. 
\begin{definition}[\textbf{A Composition Algorithm}.]\label{algo:composition}
Given subgroups $\cG \subseteq \cG'$ and algorithms for solving $(\cG'_{a,b},D)$-testing and $(\cG_{a,b},\cG'_{a,b},D)$-testing, we construct an algorithm for the problem of $(\cG_{a,b},D)$-testing, for the smaller group $\cG$, as follows. 

Let $T_{\cG'}$ and $T_{(\cG', \cG)}$ denote the $(\cG'_{a,b},D)$-testing and $(\cG_{a,b},\cG'_{a,b},D)$-testing algorithms, respectively. Let $C$ be an input quantum unitary operator. We propose the following unconditional test for $\cG$: apply the two tests $T_{\cG'}$ and $T_{\cG',\cG}$ on $C$, and accept if both the tests accept, reject if $T_{\cG'}$ rejects and abort, otherwise. The correctness of the algorithm is justified as follows:
\begin{itemize}
    \item If $T_{\cG'}$ answers that $C$ is a NO instance for $(\cG'_{a,b},D)$, then $C$ is a NO instance for $(\cG_{a,b},D)$ as well.
    \item If both $T_{\cG'}$ and $T_{(\cG', \cG)}$ answer that $C$ is a YES instance for $(\cG'_{a,b},D)$ and $(\cG_{a,b},\cG'_{a,b},D)$, respectively. Then, indeed $D(C, \cG) \leq a$. 
\end{itemize}

\end{definition}

\paragraph{Identity-Testing, Pauli-Testing, Clifford-Testing.} 

\def \tp {\mathsf{TP}}
Given unitary operators on the space of 
$q$-ary qudits, the identity-testing problem $(a,b, D)$-$\itp$ is the problem of $(\cG_{a,b},D)$-testing with $\cG = \cI$. The Pauli-testing problem $(a,b, D)$-$\ptp$ is the $(\cG_{a,b},D)$-testing problem with $\cG = \cI \cP$. The Clifford testing problem $(a,b, D)$-$\ctp$ is the $(\cG_{a,b},D)$-testing problem with $\cG= \cI \cC$.

\paragraph{The Representation of the Input $C$.} Among the 
various ways of representing the input $C$ to the testing problem, we consider the minimal one being oracle access and the maximal being given a circuit that computes $C$. 
We refer to the former as black-box (BB) and to the latter as white-box (WB). 
We often relax the notion of BB access and consider, in addition to a $C$ oracle, also access to a $C^\dagger$ oracle.
\subsection{Promise Identity-Testing for Pauli and Clifford}
We consider the exact versions, i.e., $\cG$-testing, of the Identity, Pauli and Clifford testing problems and give efficient tests for their promised versions. 
In Section~\ref{sec:equivalences}, we use these tests to construct algorithms for the $(\cG_{a,b}, \cG'_{a,b},D)$-testing problems.
\begin{theorem}[Black-Box Identity-Testing for the Pauli and Clifford Groups]\label{thm:BBTest}
The EPR Test (Test~\ref{test:eprtest}) is a test for $\itp$ with the following promise:
\begin{itemize}
    \item[(i)] if the input is in $\mathcal{IP}$, the test has both perfect completeness and prefect soundness, whereas
    \item[(ii)] if the input is in $\mathcal{IC}$, then the test has perfect completeness and soundness bounded by $1/q$. 
\end{itemize}
\end{theorem}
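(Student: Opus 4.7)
My plan is to observe that this theorem is essentially a direct packaging of the trace lemmas already proved in Sections~3 and~4, combined with Lemma~\ref{lem:eprtest} which expresses the acceptance probability of the EPR test purely in terms of the normalized trace.

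First, I would invoke Lemma~\ref{lem:eprtest} to reduce the entire analysis to understanding $\abs{\ntr(U)}^2$ for $U$ in the relevant subgroup. Perfect completeness (for both parts (i) and (ii)) is then immediate: if $U \in \cI$, i.e.\ $U = e^{i\varphi} I$, then $\ntr(U) = e^{i\varphi}$ and so $\abs{\ntr(U)}^2 = 1$, meaning the projective measurement $\ket{e}\bra{e}$ on the post-EPR state always succeeds.

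For the soundness in part (i), suppose $U \in \cI\cP$ but $U \notin \cI$. By Lemma~\ref{lem:trPauli}, $\abs{\ntr(U)} = 0$, so by Lemma~\ref{lem:eprtest} the acceptance probability equals $0$, giving perfect soundness. For the soundness in part (ii), suppose $U \in \cI\cC$ but $U \notin \cI$. By Corollary~\ref{cor:trclif} (which combines the discrete-trace theorem for Cliffords with the strict inequality in Lemma~\ref{lem:trid}), we have $\abs{\ntr(U)} \leq 1/\sqrt{q}$, and thus the acceptance probability is at most $1/q$.

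There is essentially no obstacle here: the heavy lifting was done in Theorem~\ref{thm:cliffordtraceoddq} and the elementary trace lemmas, which is precisely why the analysis of a test as simple as the EPR test goes through. The only thing to be mildly careful about is distinguishing the two regimes in part (ii) — namely that we only claim soundness $1/q$ against the rest of $\cI\cC \setminus \cI$, not against arbitrary unitaries (where the trace can be arbitrarily close to $1$) — but this matches the statement of the theorem which explicitly assumes the Clifford promise.
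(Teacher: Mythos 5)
Your proposal matches the paper's proof essentially exactly: both reduce to $\abs{\ntr(U)}^2$ via Lemma~\ref{lem:eprtest}, then invoke Lemma~\ref{lem:trid} for completeness, Lemma~\ref{lem:trPauli} for part (i), and the discrete-trace result for Cliffords for part (ii). Your use of Corollary~\ref{cor:trclif} (rather than citing Theorem~\ref{thm:cliffordtraceoddq} directly as the paper does) is, if anything, marginally cleaner since that corollary already bundles the discreteness claim with the $<1$ bound from Lemma~\ref{lem:trid}.
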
 

\begin{proof}
By Lemma~\ref{lem:eprtest}, Test~\ref{test:eprtest} accepts with probability $\abs{\ntr(C)}^2$. If $C \in \cI$ then $\abs{\ntr(C)}=1$ and thus completness is perfect by Lemma~\ref{lem:trid}. By the same lemma, if $C \not\in\cI$ then $\abs{\ntr(C)}<1$. If $C \in \cI\cP \setminus \cI$ then by Lemma~\ref{lem:trPauli} it holds that $\abs{\ntr(C)}=0$, thus $(i)$ follows. If $C \in \cI\cC \setminus \cI$ then by Theorem~\ref{thm:cliffordtraceoddq} $\abs{\ntr(C)} \le 1/\sqrt{q}$ and thus $(ii)$ follows.
\end{proof}

Our WB test takes as input 
a circuit that contains only Clifford gates, and checks whether this circuit implements the identity circuit. Note that for the ``standard'' binary Cliffords, this is immediate from the existence of a canonical representation, and in fact one can prove canonical representation for $q$-Cliffords and derive a WB test from there. However, we take a more direct route which relies on the definition of the Clifford group as the normalizer of the Pauli group.
\begin{test}[WB Identity-Testing for Clifford]\label{test:WB} Given a circuit $C$ consisting of Clifford gates only, compute the following (vectors of) polynomials $f$, $g$ and $h$, 
that arise from conjugating by $C$,
\[C X^a Z^b C^\dagger = \w_{q'}^{h(a,b)} X^{f(a,b)} Z^{g(a,b)}.\]
Here $X^a Z^b$ is the formal Pauli element $(\otimes_i X_q^{a_i}Z_q^{b_i})$,  and the vectors of polynomials $f(a,b) = (f_i(a,b))_i$, $g(a,b) = (g_i(a,b))_i$ and  
$h$ are computed as per Theorem~\ref{thm:clifconjugate}. Accept if and only if $f(a,b) = a$, $g(a,b) = b$ as vectors 
and $h$ is the zero polynomial.
\end{test}

\begin{theorem}[White-box Identity-Testing for the Clifford group]\label{thm:WBTest} Algorithm~\ref{test:WB} has perfect completeness and soundness.
\end{theorem}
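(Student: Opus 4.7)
The plan is to reduce both directions (perfect completeness and perfect soundness) to a statement about the conjugation action of $C$ on the Pauli basis, and then invoke the Pauli decomposition to conclude. The key observation driving soundness is the same one used in the proof of Lemma~\ref{lem:pauli_commute}: commuting with every Pauli basis element forces a unitary to be scalar.

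For \textbf{completeness}, I would note that if $C = e^{i\varphi} I \in \cI$, then $C X^a Z^b C^\dagger = X^a Z^b$ identically in $(a,b)$. The gate-by-gate tracking procedure behind Theorem~\ref{thm:clifconjugate} therefore returns the trivial polynomials $f(a,b) = a$, $g(a,b) = b$ and $h \equiv 0$ on any circuit implementing $I$ up to a global phase, so the test accepts.

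For \textbf{soundness}, assume the test accepts. Then $C X^a Z^b C^\dagger = X^a Z^b$ for every $(a,b) \in \bbZ_q^n \times \bbZ_q^n$, i.e.\ $C$ commutes with every Pauli basis element $P \in \cB_q^n$. I would apply Theorem~\ref{thm:Pauli_decomp} to write $C = \sum_{a',b'} \alpha_{a',b'} X^{a'} Z^{b'}$ uniquely, and then, exactly as in the proof of Lemma~\ref{lem:pauli_commute}, compute
\[
P^\dagger C P \;=\; \sum_{a',b'} \alpha_{a',b'} \, \omega_q^{\langle b', a \rangle - \langle a', b \rangle} \, X^{a'} Z^{b'}.
\]
Setting $P^\dagger C P = C$ and invoking uniqueness of the Pauli decomposition, every $(a',b')$ with $\alpha_{a',b'} \neq 0$ must satisfy $\langle b', a \rangle - \langle a', b \rangle \equiv 0 \pmod q$ for \emph{all} $(a,b)$; ranging $(a,b)$ over $\bbZ_q^n \times \bbZ_q^n$ forces $a' = b' = 0$. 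Hence $C = \alpha_{0,0} I$, and unitarity yields $|\alpha_{0,0}| = 1$, so $C \in \cI$.

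The only subtle point is verifying that the polynomial identities the test checks ($f(a,b) = a$, $g(a,b) = b$, $h \equiv 0$) really imply $C X^a Z^b C^\dagger = X^a Z^b$ for every concrete $(a,b)$, not merely as formal expressions. Since $f, g$ have total degree $1$ and the target for $h$ is the zero polynomial, formal and pointwise equality coincide in all the relevant cases (including $q = 2$), so no gap arises. I do not anticipate any serious obstacle; the ``hard'' step is essentially recycling the commutator calculation already performed in Lemma~\ref{lem:pauli_commute}, and no additional machinery beyond Theorems~\ref{thm:Pauli_decomp} and~\ref{thm:clifconjugate} is needed.
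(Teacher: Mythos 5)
Your proof is correct, and the completeness half matches the paper's. For soundness you take a slightly different (more computational) route than the paper does: the paper simply notes that acceptance forces $C$ to commute with every basis Pauli, hence---since $\mathcal{B}_q^n$ spans $M_{q^n}(\bbC)$ by Theorem~\ref{thm:Pauli_decomp}---with \emph{all} matrices, so $C$ must be scalar. You instead expand $C$ itself in the Pauli basis, rederive the commutator phase $\omega_q^{\langle b',a\rangle - \langle a',b\rangle}$ as in Lemma~\ref{lem:pauli_commute}, and use uniqueness of the decomposition plus ranging $(a,b)$ over all of $\bbZ_q^{2n}$ to kill every non-identity coefficient $\alpha_{a',b'}$. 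Both arguments hinge on Theorem~\ref{thm:Pauli_decomp}; yours is more self-contained (it in effect reproves the ``commutes with everything $\Rightarrow$ scalar'' centrality fact directly from the Pauli structure) at the cost of a few extra lines. One remark on your closing caveat: the direction you flag as subtle---formal triviality of the output polynomials implies pointwise triviality---is actually the trivial direction and needs no argument. What genuinely deserves a word (on the \emph{completeness} side, which neither you nor the paper spell out) is that the tracking algorithm of Theorem~\ref{thm:clifconjugate}, run on a nontrivial circuit implementing the identity, outputs the canonical zero polynomial $h$ rather than, say, $2x^2-2x$, which vanishes pointwise mod $4$ on binary inputs but is not formally zero. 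This is fine to leave implicit since the standard implementation reduces modulo $x_i^2 = x_i$ gate by gate, but it is the only place where formal-versus-pointwise actually matters.
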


\begin{proof}
If $C \in \cI$ then by definition it leaves all Paulis unchanged, which translates to $f$, $g$ satisfying $f(a,b) =a$, and $g(a,b) =b$, 
and $h$ being the zero polynomial. Thus completeness is perfect. On the other hand, if $f$, $g$ satisfy $f(a,b) =a$, and $g(a,b) =b$, respectively, 
and $h$ is zero, then for any Pauli $P$, the commutator $[C,P]=CPC^\dagger P^\dagger=I$. Since the Paulis generate the set of complex matrices (of the same dimension) as a vector space (Theorem~\ref{thm:Pauli_decomp}), then $C$ commutes with all matrices and is therefore a scalar matrix. 
Further, as $C$ is unitary, $C \in \cI$.
\end{proof}

\section{Equivalences Between Subgroup Testing Problems}\label{sec:equivalences}

\def\itp{\mathsf{ITP}}
\def\ptp{\mathsf{PTP}}
\def\ctp{\mathsf{CTP}}
\def \A{\widetilde{A}}
In this section, we show various randomized reductions between the three aforementioned problems, Identity, Pauli and Clifford testing. Our first two reductions, i.e., $(a,b,D)$-$\itp$ to $(a,b,D)$-$\ptp$ and $(a,b,D)$-$\itp$ to $(a,b,D)$-$\ctp$ follow from the composition algorithm (definition~\ref{algo:composition}), and hold for all three versions; exact, property and tolerant testing. Our $(a,b,D)$-$\ctp$ to $(a',b',D)$-$\ptp$ reduction is an adaptation of known work. See Subsection~\ref{subsec:ctp_ptp}, for more details. The final reduction, from $(a,b,D)$-$\ptp$ to $(a,b,D)$-$\itp$, in Subsection~\ref{subsec:ptp_itp} uses our analysis, from Section~\ref{sec:commute}, of the fixed points under the conjugation action of a unitary $C$ on the Paulis. For the last two reductions, we restrict to either the exact problem or the setup of property testing, i.e., $a=0$. 
Finally, in Section~\ref{subsec:complexity}, we use the reductions to derive several conclusions on the complexity of these problems.

\subsection{$(a,b,D)$-$\itp \leq (a,b,D)$-$\ptp$}
\label{subsec:ITP-PTP/CTP}

Our composition algorithm (definition~\ref{algo:composition}) immediately implies the following reduction from $(a,b, D)$-$\itp \le (a,b,D)$-$\ptp$: Given a unitary operator $C$ and a solver for $(a,b,D)$-$\ptp$, the algorithm $A_{\mathsf{ItoP}}$ runs the test described in the composition algorithm using the $(a,b,D)$-$\ptp$-solver and the BB test (Theorem~\ref{thm:BBTest}) for the promise $(\cI \cP, \cI)$-testing problem to construct a solver for $(a,b,D)$-$\itp$. 
    
\begin{proposition}\label{reduction:identity->pauli}
Assume that the $(a,b)$-$\ptp$ is perfectly complete and sound. Then,
\begin{itemize}
    \item[i.] When $a=0$, and $D \in \{D^{tr}, D^{op}\}$, the $(a,b,D)$-$\itp$ solver is perfectly complete and sound.
    \item[ii.] When $a>0$, and $D= D^{op}$, the $(a,b,D)$-$\itp$ solver has completeness and soundness bounded by $(1-a)^2$ and $a^2$, respectively.
    \item[iii.] When $a>0$, and $D=D^{tr}$, the $(a,b,D)$-$\itp$ solver has completeness and soundness bounded by $1-a^2$ and $a^2$, respectively.
\end{itemize}
\end{proposition}

\begin{proof}
Recall that $A_{\mathsf{ItoP}}$ accepts when both the $(a,b)$-$\ptp$ and the BB test accept. Under the assumption that the $(a,b)$-$\ptp$ solver is perfectly complete and sound, the completeness of the $(a,b)$-$\itp$ tester is at least $\abs{\ntr(C)}^2$, which is the acceptance probability of the BB test, whereas the soundness is bounded above by $\abs{\ntr(C)}^2$ (see lemma \ref{lem:eprtest}). We bound $\abs{\ntr(C)}^2$ in the three cases.
\begin{itemize}
    \item[i.] When $a=0$, the unitary operator $C$ is either in $\cI$, in which case, $\abs{\ntr(C)}=1$, or it is in $\cP \setminus \cI$, in which case $\abs{\ntr(C)}=0$, by Lemma~\ref{lem:trPauli}.
    \item[ii.] Let $a>0$ and $D=D^{op}$. For arguing completeness, let $D(C, \cI) \leq a$. Then $C$ should be accepted by the $\itp$, and indeed, it is accepted with probability at least $\abs{\ntr(C)}^2$. In order to calculate the trace, recall that $\abs{\ntr(C-I)} \leq \norm{C-I}_{op}$ (by Lemma~\ref{lem:matrixnorms}), and therefore
    \[1 = \abs{\ntr(I)} \leq \abs{\ntr(C-I)} + \abs{\ntr(C)} \leq a + \abs{\ntr(C)} \]
    implies that $\abs{\ntr(C)}^2\le (1-a)^2$.\\
    For soundness, we assume that $D(C,\cI) \geq b$, in which case the $\itp$ should not accept it. However, the $\itp$ does accept it when the $\ptp$ accepts it (i.e., $D(C, \cP) \leq a$) and the BB test accepts it. The latter accepts it with probability bounded above by $\abs{\ntr(C)}^2$. Therefore, soundness of the $\itp$ is bounded above by 
    \[ \abs{\ntr(C)}^2 = \abs{\ntr(C-P)}^2 \le \norm{C-P}_{op}^2 \le a^2 \]
    \item[iii.] Let $a>0$ and $D= D^{tr}$. We argue for completeness and soundness exactly as in ii. For completeness, we assume that $D(C, \cI) \leq a$. Then, by definition, $\abs{\ntr(C)}^2 \geq 1-a^2$, and we get that completeness is at least $1-a^2$. For soundness, we assume that $D(C,\cI) \geq b$ and $D(C, P) \leq a$. The latter implies that $1-\abs{\ntr(CP^\dagger)}^2 \leq a^2$. 
    Observe that $\ntr(CP^{\dagger})$ is the coefficient of $P$ in the Pauli decomposition of $C$. Since $C$ is a unitary, sum of the squares of the absolute value of its coefficients from the Pauli decomposition should equal $1$. Therefore, $\abs{\ntr(C)}^2 + \abs{\ntr(CP^\dagger)}^2 \leq 1$, and we get that $\abs{\ntr(C)}^2 \leq a^2$.
\end{itemize}
\end{proof}
    
\subsection{$(a,b,D)$-$\itp \leq (a,b,D)$-$\ctp$}
Another application of the composition algorithm (definition~\ref{algo:composition}) is a reduction from $(a,b,D)$-$\itp$ to $(a,b,D)$-$\ctp$. Given unitary operator $C$ and a solver for $(a,b,D)$-$\ctp$, algorithm $A_{\mathsf{ItoC}}$ runs the test described in the composition algorithm using the $(a,b,D)$-$\ctp$-solver and 
    the BB test (Theorem~\ref{thm:BBTest}) for the promise $(\cI \cC, \cI)$-testing problem, to construct a solver for $(a,b,D)$-$\itp$. 
    Assuming that the $(a,b,D)$-$\ctp$ solver has perfect completeness and soundness, we compute the completeness and soundness of the $(a,b,D)$-$\itp$ solver as follows.

\begin{proposition} \label{reduction:identity->clifford}
Assume that the $(a,b, D)$-$\ctp$ solver is perfectly complete and sound. Then,
\begin{itemize}
    \item[i.] When $a=0$, and $D \in \{D^{tr}, D^{op}\}$, the $(a,b,D)$-$\itp$ solver is perfectly complete and has soundness at most $\frac{1}{q}$.
    \item[ii.] When $a>0$, and $D= D^{op}$, the $(a,b,D)$-$\itp$ solver has completeness and soundness bounded by $(1-a)^2$ and $(\frac{1}{q}+a)^2$, respectively.
    \item[iii.] When $a>0$, and $D^{tr}$, the $(a,b,D)$-$\itp$ solver has completeness and soundness bounded by $1-a^2$ and $1- (\sqrt{1-1/q}-a)^2$, respectively.
\end{itemize}
\end{proposition}

\begin{proof}
Let $C$ be the input unitary operator, described above. The blueprint of the proof and the completeness arguments are identical to that of the proof of Proposition~\ref{reduction:identity->pauli}. For soundness, recall that we assume that $D(C,\cI) \geq b$ and $D(C, \cC) \leq a$. Then, the soundness of the $\itp$ is bounded above by $\abs{\ntr(C)}^2$.
\begin{itemize}
    \item[i.] When $a=0$, and the unitary operator $C$ is in $\cC \setminus \cI$, then $\abs{\ntr(C)}^2\leq \frac{1}{q}$.
    \item[ii.] When $a>0$, and $D = D^{op}$. If $D(C,\cI) \geq b$, and $D(C,\cC) \leq a$ then $\abs{\ntr(C- C')} \leq \norm{C- C'}_{op}$ (by Lemma~\ref{lem:matrixnorms}), for some $C' \in \cC$, implies that soundness is at most $a + \frac{1}{q}$, since
    \[\abs{\ntr(C)} \leq \abs{\ntr(C-C')} + \abs{\ntr(C')} \leq a + \frac{1}{q}\]
    \item[iii.] When $a>0$, and $D=D^{tr}$. If $D(C, \cI) \geq b$ and $D(C, \cC) \leq a$, we use the definition of $D^{tr}$ and the fact that $D^{tr}(C,\cI) \leq D^{tr}(C,C') + D^{tr}(C', \cI)$ to conclude that $\abs{\ntr(C)}^2 \leq 1 - \left(\sqrt{1 - \frac{1}{q}} -a \right)^2$, in this case.
\end{itemize}
\end{proof}

\subsection{$(0,b,D^{tr})$-$\ctp \leq (0,b', D^{tr})$-$\ptp$}\label{subsec:ctp_ptp}
The following reduction is a generalization of the results in \cite[Section V]{wangPropertyTesting} to the $q$-ary setting. Even though the author restricts to working with qubits, the proofs hold for qudits as well. Further, the author uses the distance measure $D(U,V) = \sqrt{1- \abs{\ntr(UV^{\dagger})}}$, which is equivalent to our distance measure $D^{tr}$. See~\cite[Section 5.1.1, pg 46]{montanaro2018survey} for an explanation of the equivalence of the two distance measures. We state our generalization of the result here and refer the reader to \cite[Section V]{wangPropertyTesting} for a proof. 

Given a unitary operator $C$ and a $(0,b', D^{tr})$-$\ptp$ solver, the reduction algorithm $A_{\mathsf{CtoP}}$ runs the following test $O(1)$ number of times; pick a random Pauli $P \in \cP$, and run $CPC^{\dagger}$ through the $(0, b', D^{tr})$-$\ptp$ solver and output YES if all the iterations accept, otherwise output NO.
\begin{proposition}\cite[Algorithm 4]{wangPropertyTesting} \label{reduction:clifford->pauli}
Assume that the $(0,\frac{b}{8}, D^{tr})$-$\ptp$ solver is perfectly complete and sound, then $(0, b, D^{tr})$-$\ctp$ 
is perfectly complete with soundness bounded above by $\frac{1}{3}$.
\end{proposition} 

\begin{remark}
In \ref{subsec:ctp_ptp}, it is assumed that the reduction receives $C^\dagger$ in addition to $C$. The recent work of \cite{Gross_2021} improves upon \cite{wangPropertyTesting} by removing the need for $C^{\dagger}$ when testing if $C$ is a Clifford in a black box manner. Our reduction however uses the inverse $C^\dagger$ to reduce the Clifford testing problem to Pauli testing problem, so we keep the reduction from \cite{wangPropertyTesting}.  
\end{remark}

\begin{remark}
For the exact version of the problem, one could construct, as follows, a $\cC$-tester that is perfectly complete and perfectly sound, as well. 
Let $P^x_i=I^{\otimes i}\otimes X_q\otimes I^{\otimes n-i-1}$ and $P^z_i=I^{\otimes i}\otimes Z_q\otimes I^{\otimes n-i-1}$, for $i\in [n]$. Given a unitary operator over $n$ qudits and $\cP$-tester (definition~\ref{def:exact_problem}), the algorithm $A_{\mathsf{CtoP}}$ queries the $\cP$-tester on each of the $2n$ matrices $\{C^{\dagger}P_i^xC, C^\dagger P_i^xC\}_{i \in[n]}$, and accept if and only if the $\cP$-tester accepts on all. Correctness follows by the fact that the set $\{P_i^x, P_i^z\}_{i\in[n]}$ generates the Pauli group $\mathcal{P}_q^n$ up to global phases, and the fact that Cliffords normalize the Pauli group. 
\end{remark}

\subsection{$(0,b,D^{tr})$-$\ptp \leq (0,b', D^{tr})$-$\itp$}\label{subsec:ptp_itp}
Given an $(0,b', D^{tr})$-$\itp$ solver, the reduction  algorithm $A_{\mathsf{PtoI}}$ creates the following circuit $C'$, sends $C'$ to the $(0,b', D^{tr})$-$\itp$ solver and answers the same as the $(0,b', D^{tr})$-$\itp$ solver does. To generate $C'$, the algorithm $A_{\mathsf{PtoI}}$ samples a random $P \in \cP_q^n$, and defines:
\[ C' = [C,P] := C^{\dagger}P^{\dagger}CP\]
\begin{proposition} \label{reduction:pauli->identity} 
Assume that the $(0,\frac{b^2}{2}, D^{tr})$-$\itp$ solver is perfectly complete and sound. Then, $(0,b,D^{tr})$-$\ptp$ has perfect completeness and soundness bounded above by $\min \left\{\frac{1}{q},\left(1-\frac{b^2}{2}\right)\right\}$. 
\end{proposition}

\begin{proof}
When $C \in \cP$, then, by definition, $[C,P] \in \cI$, for all $P \in \cP$, and therefore the $\ptp$ always accepts and is perfectly complete. When $C \notin \cP$, by Lemma~\ref{lem:approximate_commutator}, with probability at most $\left(1-\frac{b^2}{2}\right)$, the distance $D^{tr}(C',\cI) \leq \frac{b^2}{2}$. Further, in this case, since $C \notin \cP$, at most $\frac{1}{q}$ fraction of the Paulis satisfy $[C,P] \in \cI$, by Lemma~\ref{lem:pauli_commute}. The result follows.
\end{proof}

\subsection{Complexity Implications}\label{subsec:complexity}
The reductions between the three avatars of Identity, Pauli and Clifford testing problems; exact, property and tolerant testing yield the following corollaries. We include a short description of the proofs at the end of this section.
\begin{corollary}
The exact problem (see definition~\ref{def:exact_problem}) of $\cI$-testing, $\cP$-testing and $\cC$-testing are all equivalent. Further, if the input is provided using an efficient implementation, the $\cP$-testing and $\cC$-testing problems are co-NQP complete under randomized reductions.
\end{corollary}

\begin{corollary}
The problems $(a,b, D^{op})$-$\ptp$ and $(a,b, D^{op})$-$\ctp$, when the input is provided using an efficient implementation, are co-QMA hard (QMA hard under Turing reductions) for $b-a\geq 1/poly(n \log q)$.
\end{corollary}

\begin{corollary}
Under the property testing setup, we have the following randomized reductions with distance measure $D^{tr}$,
\begin{itemize}
    \item $(0,b,D^{tr})-\itp \lra (0,b,D^{tr})- \ctp \lra (0,b/8, D^{tr})- \ptp \lra (0, b^2/128, D^{tr})- \itp$\\ 
    \item $(0,b,D^{tr})-\itp \lra (0,b, D^{tr})- \ptp \lra (0, b^2/2, D^{tr})- \itp$
\end{itemize}
\end{corollary}

The first corollary follows from Proposition~\ref{reduction:identity->pauli}$i.$ and Proposition~\ref{reduction:identity->clifford}$i.$ and from the fact that the non Identity testing problem is NQP complete (\cite[Theorem 1]{tanaka2009exact}). The second corollary follows from Proposition~\ref{reduction:identity->pauli}$ii.$ and Proposition~\ref{reduction:identity->clifford}$ii.$ and the fact that non-identity test is QMA-complete~\cite{NonIDtesting}. Both assume that if the input to $\ptp/\ctp$ is given in an efficiently executable manner, then in particular it is efficient to implement the solvers that are required by our reductions. The final corollary follows from Proposition~\ref{reduction:identity->pauli}$iii.$, Proposition~\ref{reduction:identity->clifford}$iii.$, Proposition~\ref{reduction:clifford->pauli} and Proposition~\ref{reduction:pauli->identity}.

\bibliographystyle{alpha}
\bibliography{biblio}
\appendix
\section{Appendix: Traces of Pauli and Clifford Operators are Discrete - Deferred proof of Theorem~\ref{thm:cliffordtraceoddq}}\label{apx:traces}
We begin with certain definitions and results on quadratic exponential sums that we use to prove Theorem~\ref{thm:cliffordtraceoddq}.
\subsection{Sum-Over-Paths Formalism} 
\label{sec:prelim:sumoverpaths}
The sum-over-paths formalism provides a way to express the outcome of the action of a quantum circuit on qudits. We restrict to Clifford circuits for our purpose. See \cite{Koh2017ComputingQC,Wolf2019QuantumCL} for further details. 

Let $C$ be a $q$-ary Clifford circuit that is classically described as an ordered sequence of the generators of $\mathcal{C}_q^n$ and which registers they act on. Under the sum-over-paths technique, one follows the circuit, one gate at a time, and updates the amplitudes of the computational basis elements according to the definition of the gate. We refer the reader to \cite[Sec III]{Koh2017ComputingQC} for a detailed explanation. 
We assume without loss of generality that $C$ has $k$ intermediate $\bbF_q$ gates and $n$ terminal $\bbF_q$ gates, the latter applied to each output wire. 

\begin{proposition}\label{prop:clifequation}
Let $C$ be a Clifford circuit acting on $n$ $q$-ary qudits, with $k$ intermediate and $n$ terminal $F_q$ gates. Then in the binary and odd-prime settings, respectively, 
\begin{align}\label{eq:phasepol2}
C\ket{x}&= \frac{1}{2^{(k+n)/2}}  \sum_{y \in \bbF_2^n} \ket{y} \sum_{v\in \bbF_2^k} \w_4^{s(x,v)}\w_2^{g(x,v,y)} = \frac{1}{2^{(k+n)/2}}  \sum_{y \in \bbF_2^n} \ket{y} \sum_{v\in \bbF_2^k} \w_4^{s(x,v)+2g(x,v,y)} \\
\label{eq:phasepolq}
C\ket{x}&= \frac{1}{q^{(k+n)/2}}\sum_{y \in \bbF_q^n} \ket{y} \sum_{v \in \bbF_q^k} \w_q^{h(x,v,y)}, 
\end{align}
where $s$ is a linear polynomial and $g,h$ are quadratic polynomials.
\end{proposition}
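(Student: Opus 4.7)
I would prove this by induction on the number of gates of $C$, written as a sequence of generators from Theorem~\ref{thm:clifgen}. For the inductive invariant I would claim that after processing any prefix of $C$ containing $k'$ intermediate $F_q$ gates, the state is
\[
\frac{1}{q^{k'/2}} \sum_{v \in \bbF_q^{k'}} \w_{q'}^{\Phi(x,v)} \ket{L_1(x,v), \ldots, L_n(x,v)},
\]
where each wire value $L_i$ is a linear polynomial in $(x,v)$, and $\Phi$ is quadratic over $\bbZ_q$ with $q'=q$ in the odd-prime case, or of the form $s(x,v) + 2g(x,v)$ with $s$ linear and $g$ quadratic in the binary case (so $q'=4$ and $\w_4^{s}\w_2^g = \w_4^{s+2g}$). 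The base case is the empty circuit, where $k'=0$, $L_i=x_i$, and $\Phi=0$.

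The inductive step handles each generator individually. A $CNOT_q$ on wires $(i,j)$ replaces $L_j$ by $L_i + L_j$ and leaves $\Phi$ unchanged; linearity is preserved. An $S_q$ on wire $i$ multiplies the phase by $\w_q^{L_i(L_i-1)/2}$ in the odd-prime case, which is quadratic since $L_i$ is linear; in the binary case it multiplies by $\w_4^{L_i}$, adding the linear polynomial $L_i$ to $s$. An intermediate $F_q$ on wire $i$ introduces a fresh summation variable $v_{\text{new}}$, multiplies the phase by $\w_q^{v_{\text{new}} L_i}$, and replaces $L_i$ by $v_{\text{new}}$; the new phase factor is quadratic (product of two linear polynomials), and in the binary case equals $\w_4^{2 v_{\text{new}} L_i}$ and so contributes to $g$. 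Applying the $n$ terminal $F_q$ gates, one per output wire, then introduces fresh output variables $y_i$, phases $\w_q^{y_i L_i(x,v)}$, and replaces the ket content by $\ket{y_1, \ldots, y_n}$; collecting all $\tfrac{1}{\sqrt{q}}$ factors gives the advertised normalization $q^{-(k+n)/2}$, and the new $y_i L_i$ terms appear as a single quadratic polynomial $h(x,v,y)$ in the odd-prime case, or augment $g(x,v,y)$ (via $\w_2^{y_i L_i}$) while leaving $s$ independent of $y$ in the binary case.

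The step requiring the most care is the binary case, where one must track $\w_4$ and $\w_2$ phases simultaneously and verify that they always combine as $\w_4^{s+2g}$. This works out cleanly because every $S$ gate contributes only the linear current value $L_i$ to $s$, while every $H$ gate contributes with an explicit factor of $2$ (since $\w_2 = \w_4^2$), so the linear and quadratic parts remain cleanly separated and never cross-pollinate through the induction. A minor but necessary bookkeeping point is to interpret $L_i$ as an integer-valued polynomial in $\{0,1\}$-valued variables so that $\w_4^{L_i}$ is well-defined and so that ``quadratic'' is unambiguous as a statement about polynomials before reduction modulo $2$ or $4$; this is standard in the sum-over-paths formalism.
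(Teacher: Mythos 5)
Your plan is the same as the paper's: the paper proves the intermediate form (Eqs.~\eqref{eq:intemediatesumoverpathform2} and~\eqref{eq:intemediatesumoverpathformq}, i.e.\ no terminal $F$ gates, linear ket labels, phase polynomial of the stated shape) by induction on the gates and then applies the $n$ terminal $F$ gates; you simply spell out the gate-by-gate induction that the paper asserts in one sentence. One bookkeeping point deserves tightening. The wire label $L_i$ is a formal integer polynomial, but the physical wire value is $L_i$ reduced mod $2$, and the $S$ gate applies $\w_4^{L_i \bmod 2}$, not $\w_4^{L_i}$. Since for $\binset$-valued $a_j$ one has $a_1\oplus\cdots\oplus a_m \equiv \sum_j a_j - 2\sum_{j<k}a_ja_k \pmod{4}$, the $S$ gate phase equals $\w_4^{L_i}\w_2^{-Q}$ with $Q$ quadratic, so $S$ contributes to both $s$ and $g$, not ``only'' to $s$ as you write. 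This does not break the argument --- the phase is still of the form $\w_4^{s+2g}$ with $s$ linear and $g$ quadratic, which is all the proposition claims --- but the cleaner-sounding statement that $s$ and $g$ ``never cross-pollinate'' through the induction is not literally true. (The $\w_2$ phases from $F$ gates are unaffected by this issue since $\w_2$ depends only on the argument mod $2$.) With that correction, your proposal is a faithful and more explicit version of the paper's proof.
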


The case of odd prime $q$ has been analyzed in \cite{Koh2017ComputingQC} and the case of $q=2$ is fairly similar but we provide a proof for the sake of completeness.

\begin{proof}
We will show that without the terminal $F$ gates it holds that
\begin{align}\label{eq:intemediatesumoverpathform2}
C\ket{x}&= \frac{1}{2^{k/2}}   \sum_{v\in \bbF_2^k} \w_4^{s(x,v)}\w_2^{g(x,v)} \ket{\ell(x,v)} \\
\label{eq:intemediatesumoverpathformq} C\ket{x}&= \frac{1}{q^{k/2}} \sum_{v \in \bbF_q^k} \w_q^{h(x,v)} \ket{\ell(x,v)}, 
\end{align}
where $\ell$ are linear functions. Applying the terminal $F$ gates then immediately implies Eq.~\eqref{eq:phasepol2},~\eqref{eq:phasepolq}. Eq.~\eqref{eq:intemediatesumoverpathform2},~\eqref{eq:intemediatesumoverpathformq} simply follow by induction on the functionality of the Clifford gates as described above.
\end{proof}

\paragraph{Respectful Polynomials.} For our analysis of the trace of binary Clifford circuits (Section~\ref{sec:traces}), we will require the following definition.

\begin{definition}\label{def:respectpol}
	Let $h(x)$ be a quadratic polynomial with integer coefficients over variables $x_1, \ldots, x_n$. We say that $h$ is \emph{respectful} if it can be written as $h(x)=s(x)+2g(x)$, where $s(x)$ is a linear function in the squares of the variables, i.e.\ in $x^2_1, \ldots, x^2_n$ and $g(x)$ is an arbitrary quadratic polynomial.
\end{definition}

The following proposition asserts that the sum-over-paths expression for binary Cliffords yields a respectful polynomial in the exponent. 
\begin{proposition}
\label{prop:respectpol}
The polynomial obtained in Eq.~\eqref{eq:phasepol2}, in the binary case, can be rewritten as $\w_4^{h'(x,v,y)}$ where $h'$ is respectful. \end{proposition}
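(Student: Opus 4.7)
The plan is to observe that for any Boolean $z \in \{0,1\}$, one has $z = z^2$ as integers, so the linear part $s(x,v)$ of the exponent can be harmlessly ``promoted'' to a polynomial in the squares of the variables without changing its value on any Boolean input. Concretely, I would start from the exponent appearing in Eq.~\eqref{eq:phasepol2}, namely $h(x,v,y) := s(x,v) + 2g(x,v,y)$, and write $s$ explicitly as $s(x,v) = c_0 + \sum_{i} c_i x_i + \sum_{j} d_j v_j$ with integer coefficients (the variables $y$ do not appear in $s$, which is linear in $x,v$ only).

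Next, I would define $s^{*}(x,v) := c_0 + \sum_{i} c_i x_i^2 + \sum_{j} d_j v_j^2$, obtained from $s$ by replacing every occurrence of a variable by its square. By construction, $s^{*}$ is a linear function in $x_1^2,\ldots,x_n^2, v_1^2,\ldots, v_k^2$, exactly as required in Definition~\ref{def:respectpol}. I would then set
\[
h'(x,v,y) \; := \; s^{*}(x,v) + 2\, g(x,v,y),
\]
which is respectful by inspection: its ``$s$-part'' $s^{*}$ is linear in the squares of the variables, and the remaining part $2g$ has the required factor of $2$ in front of an arbitrary quadratic polynomial.

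Finally, I would verify that this rewriting does not change the phase. For any Boolean assignment $(x,v,y) \in \bbF_2^{n+k+n}$ one has $x_i^2 = x_i$ and $v_j^2 = v_j$ as integers, whence $s^{*}(x,v) = s(x,v)$, and therefore $h'(x,v,y) = h(x,v,y)$ as integers (not merely modulo $4$). In particular $\w_4^{h'(x,v,y)} = \w_4^{h(x,v,y)}$, so substituting $h'$ for $h$ in Eq.~\eqref{eq:phasepol2} yields the same state.

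There is no real obstacle here: the content of the statement is essentially a change of bookkeeping, using the single identity $z = z^2$ for $z \in \{0,1\}$ to move all the linear monomials of $s$ into the ``linear-in-squares'' slot of the respectful decomposition. The only thing to double-check is that $s$ depends only on $x,v$ (not on $y$), which is exactly what Proposition~\ref{prop:clifequation} asserts, so $s^{*}$ is a legitimate linear-in-squares polynomial in the full variable set $(x,v,y)$ as well.
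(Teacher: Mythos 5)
Your proof is correct and follows essentially the same approach as the paper's one-sentence argument: replace each variable in the linear part $s(x,v)$ by its square, noting that this leaves the value unchanged on Boolean inputs (since $z^2=z$ for $z\in\{0,1\}$) and yields an exponent of the respectful form $s^*+2g$. Your version is just more explicit, writing out $s$ with coefficients, defining $s^*$ concretely, and flagging that $s$ depends only on $(x,v)$; all of this matches the paper's intent.
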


\begin{proof}
From Eq.~\eqref{eq:phasepol2} we have, in the exponent of $\omega_4$, the polynomial $s(x,v)+2g(x,v,y)$, where $s$ is linear and $g$ is quadaratic. However, since we only sum over binary variables for which $z^2=z$, we can substitute each variable in $s$ by its square and obtain an equvialent respectful polynomial.
\end{proof}

\subsection{Proof of Theorem \ref{thm:cliffordtraceoddq}}\label{apx:proofofcliffordtrace}
We analyze the case of $q=2$ separately from that of $q$ being an odd prime. 
We begin with the latter which is fairly straightforward.

\begin{lemma}\label{lem:sumovervars}\label{lem:clifq}
Let $q$ be an odd prime. Let $h(x)$ be an multivariate quadratic polynomial  over variables $x_1, \ldots, x_n$. Then there exist an integer $t\le n$, a root of unity $c$ and $z \in \binset$
 such that
\begin{align}\label{eq:expsumodd}
    \sum_{x}\omega_q^{h(x)}= q^{\frac{t}{2}} \cdot c \cdot z~.
\end{align}
\end{lemma}
\begin{proof}
When $h(x)$ is a polynomial in one variable (i.e, $n=1$), the equality in Equation~(\ref{eq:expsumodd}) follows from ~\cite[Theorem 5.33]{lidl_niederreiter_1996}. 
For the multi-variate setting, we note that any quadratic form over an odd prime order field can be diagonalized, thus via change of variables, $\sum_{x}\omega_q^{h(x)}$ can be expressed as a product of sums of single-variable quadratic functions. 
\end{proof}

\begin{lemma}[The $q=2$ case] \label{lem:clliford2}

	Let $h(x)$ be a respectful quadratic polynomial over variables $x_1, \ldots, x_n$, as defined in definition~\ref{def:respectpol}.
	Then there exist an integer $t$, a root of unity $c$ and $z\in\binset$ such that
	\begin{align}
		\sum_{x \in \binset^n} \omega_4^{h(x)} = 2^{\frac{t}{2}} \cdot c  \cdot z ~.
	\end{align}
\end{lemma}

We start with the proof of Theorem~\ref{thm:cliffordtraceoddq} based on the two Lemmas from above, followed by a proof for Lemma~\label{lem:clif2} which is a bit more involved. 

\begin{proof}[Proof of Theorem~\ref{thm:cliffordtraceoddq}]
Let $C$ be a Clifford circuit acting on $n$ $q$-ary qudits, with $k$ intermediate and $n$
terminal $F_q$ gates. Consider the sum-over-paths formalism, as described in Section~\ref{sec:prelim:sumoverpaths}. By Propositions~\ref{prop:clifequation} and~\ref{prop:respectpol}, and by definition of the trace, we have
\begin{align}
\tr(C) = \sum_x \bra{x}C\ket{x}=\frac{1}{q^{(n+k)/2}}\sum_{x,v}\omega_{q'}^{g(v,x)}~,
\end{align}
where if $q=2$ then $q'=4$ and $g$ is respectful, and if $q$ is an odd prime then $q'=q$ and $g$ is an arbitrary quadratic polynomial over $k+n$. Applying Lemma~\ref{lem:clliford2} or Lemma~\ref{lem:clifq} for the respective cases, the theorem follows.
\end{proof}

\paragraph{Proof of Lemma~\ref{lem:clliford2}}
\label{sec:proofclif2}

We start by establishing some useful properties of respectful polynomials.

\begin{claim}\label{claim:respectfullinear}
    The property of being respectful is preserved under integer linear transformations of the variables.
\end{claim}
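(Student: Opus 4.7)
The plan is to unwind the definition directly. Write $h(x) = s(x) + 2g(x)$ where $s(x) = \sum_i c_i x_i^2$ is linear in the squares and $g$ is an integer quadratic polynomial, and let $x = My$ for an integer matrix $M \in \mathbb{Z}^{n\times m}$. I want to show $h(My)$ can again be expressed as (linear in $y_1^2,\ldots,y_m^2$) $+\; 2\cdot$(integer quadratic in $y$).

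The easy half is $2g(My)$: since $g$ has integer coefficients and $M$ is integer, $g(My)$ is an integer quadratic polynomial in $y$, so $2g(My)$ already has the right shape and can be absorbed into the $2g'$ part. The substantive step is expanding
\[
s(My) \;=\; \sum_i c_i \Bigl(\sum_j M_{ij}y_j\Bigr)^2 \;=\; \sum_j \Bigl(\sum_i c_i M_{ij}^2\Bigr) y_j^2 \;+\; 2\sum_{j<k} \Bigl(\sum_i c_i M_{ij}M_{ik}\Bigr) y_j y_k.
\]
The first sum is linear in the $y_j^2$ with integer coefficients $\sum_i c_i M_{ij}^2$, so it is a legitimate ``linear-in-squares'' piece. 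The second sum is exactly $2$ times an integer quadratic polynomial in $y$, so it is absorbed into the $2g'$ part. Combining, $h(My) = s'(y) + 2g'(y)$ with $s'(y) = \sum_j (\sum_i c_i M_{ij}^2) y_j^2$ linear in the squares and $g'$ an integer quadratic, so $h(My)$ is respectful.

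The key observation that makes the claim work is that when one squares a linear combination of variables, the off-diagonal ``cross terms'' $M_{ij} M_{ik} y_j y_k$ acquire a factor of $2$ for free from the binomial expansion, and this free factor of $2$ is precisely what is needed to push them into the $2g$ component of the respectful decomposition. If one dropped that factor (e.g., worked modulo $2$ instead of $4$), the statement would fail, so this is where the definition of respectful is tailored to the expansion. There is no real obstacle; the only thing to be careful about is that nothing in the definition requires the transformation to be invertible, so the claim holds for any integer matrix $M$ (not just changes of variables), which is all that will be needed downstream.
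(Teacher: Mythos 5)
Your proof is correct and follows essentially the same route as the paper's: the paper's one-line argument is exactly the observation that $(x+y)^2 = x^2 + 2xy + y^2$ forces every cross term in $s(My)$ to carry a factor of $2$, which you spell out explicitly by expanding $\bigl(\sum_j M_{ij}y_j\bigr)^2$ and splitting off the diagonal part.
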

\begin{proof}
This follows since $(x+y)^2=x^2+2xy+y^2$, so every monomial that is not a square of a variable has an even coefficient.
\end{proof}

\begin{claim}\label{claim:respectmod2}
	If $h$ is respectful then for any integer vector $x$ it holds that $h(x) = h(x \pmod{2}) \pmod{4}$. 
\end{claim}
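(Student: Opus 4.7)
The plan is to decompose both the input vector and the polynomial, track congruences modulo~$4$ carefully through each component, and use the respectful structure exactly once---when handling the "square" part~$s$---to upgrade congruence mod~$2$ to congruence mod~$4$.

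First I would write $x_i = x_i' + 2k_i$, where $x_i' = x_i \bmod 2 \in \{0,1\}$ and $k_i \in \Z$. The goal is to compare $h(x)$ and $h(x')$ mod~$4$, and by the hypothesis $h = s + 2g$ with $s$ linear in the squares $x_1^2,\ldots,x_n^2$ (with integer coefficients) and $g$ an arbitrary integer quadratic, it suffices to prove $s(x) \equiv s(x') \pmod 4$ and $g(x) \equiv g(x') \pmod 2$ separately.

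For the $s$ part I would use the key algebraic identity
\[
x_i^2 = (x_i' + 2k_i)^2 = (x_i')^2 + 4\bigl(x_i' k_i + k_i^2\bigr)
\equiv (x_i')^2 \pmod 4.
\]
Since $s$ is an integer linear combination of the $x_i^2$'s plus a constant, this gives $s(x) \equiv s(x') \pmod 4$ immediately by summing. This is the step where respectfulness is essential: a monomial like $x_i^2$ with an odd coefficient respects congruence mod~$2$ but can fail to respect congruence mod~$4$ when $x_i$ is replaced by $x_i \bmod 2$; packaging all such pure-square terms inside $s$ (rather than inside $2g$) is what forces the mod~$4$ statement.

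For the $2g$ part I only need $g(x) \equiv g(x') \pmod 2$, which holds for \emph{any} integer polynomial $g$ because each monomial $\prod_j x_j^{e_j}$ with integer coefficient satisfies $\prod_j x_j^{e_j} \equiv \prod_j (x_j')^{e_j} \pmod 2$ (as $x_j \equiv x_j' \pmod 2$). Multiplying by $2$ yields $2g(x) \equiv 2g(x') \pmod 4$. Adding the two congruences gives $h(x) \equiv h(x') \pmod 4$, as required. There is no genuine obstacle here; the only subtlety is recognizing that respectfulness is precisely what lets the square part survive the modulus upgrade from $2$ to $4$.
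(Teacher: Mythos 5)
Your proof is correct and follows essentially the same route as the paper, which simply states the two identities $(a+2b)^2 \equiv a^2 \pmod 4$ and $2a \equiv 2(a+2b) \pmod 4$ and lets the reader combine them; you have spelled out the decomposition $h=s+2g$ and the per-part congruences explicitly.
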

\begin{proof}
	The claim follows since for all integer $a,b$ it holds that $(a+2b)^2=a^2 \pmod{4}$ and $2a = 2(a+2b) \pmod{4}$.
\end{proof}

The following is derived by direct calculation.

\begin{claim}\label{claim:onesquare}
	For any integer $y$, $\sum_{x \in \bbZ_4} \omega_4^{(x+y)^2} = 2(1+i)=2 \sqrt{2} \cdot \zeta$ where $\zeta = \tfrac{i+1}{\sqrt{2}}$ is a root of unity.
\end{claim}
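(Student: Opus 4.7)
The plan is to reduce the sum to a sum over a complete set of residues modulo $4$ and then evaluate directly. The shift $x \mapsto x+y$ is a bijection on integer arithmetic, but here $x$ ranges over $\bbZ_4$ while $y$ is an integer; so $x+y$ runs through the four consecutive integers $y, y+1, y+2, y+3$. Since $\omega_4^m = i^m$ depends only on $m \bmod 4$, and since $(x+y)^2 \bmod 4$ depends only on the parity of $x+y$ (as $(2a)^2 \equiv 0$ and $(2a+1)^2 \equiv 1 \pmod 4$), the individual summands depend only on the parity of $x+y$.

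From here I would simply count: among any four consecutive integers, exactly two are even and exactly two are odd. Thus two of the four terms contribute $\omega_4^0 = 1$ and the other two contribute $\omega_4^1 = i$, independent of $y$. This gives
\[
\sum_{x \in \bbZ_4} \omega_4^{(x+y)^2} \;=\; 2 \cdot 1 + 2 \cdot i \;=\; 2(1+i).
\]
Finally I would verify the claimed normalization by observing that $2(1+i) = 2\sqrt{2} \cdot \tfrac{1+i}{\sqrt{2}} = 2\sqrt{2}\cdot \zeta$, where $\zeta = e^{i\pi/4}$ is indeed a (primitive $8$th) root of unity.

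There is essentially no obstacle here: the only point deserving a sentence of care is the justification that the answer is independent of $y$, which follows from the fact that a window of four consecutive integers always contains the same number of even and odd entries. No facts beyond elementary properties of $\omega_4$ and parity are needed.
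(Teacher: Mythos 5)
Your proof is correct. The paper actually states this claim without a proof (treating it as a direct calculation), and your parity argument is the natural way to carry it out: $(x+y)^2 \bmod 4$ is $0$ if $x+y$ is even and $1$ if $x+y$ is odd, any four consecutive integers split evenly between the two parities, so the sum is $2\cdot 1 + 2\cdot i = 2(1+i) = 2\sqrt{2}\,\zeta$. Nothing to add.
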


\begin{lemma}\label{lem:squarecase}
For any integer $y$, it holds that 
\[\sum_{x \in \binset} \omega_4^{x^2+2xy} = \tfrac{1}{2} \sum_{x \in \bbZ_4} \omega_4^{x^2+2xy} = \tfrac{1}{2} \sum_{x \in \bbZ_4} \omega_4^{(x+y)^2-y^2} =\sqrt{2} \zeta \omega_4^{-y^2}~.\]
\end{lemma}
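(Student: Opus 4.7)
The plan is to establish the three equalities in the chain from left to right, each of which follows from a previously stated tool.

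For the first equality, I would observe that the polynomial $h(x) = x^2 + 2xy$ (treating $y$ as a fixed integer parameter) is respectful in $x$ by definition: the coefficient of $x^2$ is $1$, and the remaining monomial $2xy$ has even coefficient. By Claim~\ref{claim:respectmod2}, $h(x) \equiv h(x \bmod 2) \pmod 4$ for every integer $x$, so the value of $\omega_4^{h(x)}$ only depends on $x \bmod 2$. Summing over $\bbZ_4$ therefore counts each residue class modulo $2$ exactly twice, which gives
\[
\sum_{x \in \bbZ_4} \omega_4^{x^2 + 2xy} \;=\; 2 \sum_{x \in \binset} \omega_4^{x^2 + 2xy}~,
\]
which is the first equality after dividing by $2$.

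For the second equality, I would simply use the algebraic identity $(x+y)^2 - y^2 = x^2 + 2xy$ inside the exponent; nothing further is needed since this holds as integers.

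For the third equality, I would pull the constant factor $\omega_4^{-y^2}$ outside the sum and then observe that, since summing $x$ over $\bbZ_4$ is the same as summing $x + y$ over $\bbZ_4$ (a bijection of $\bbZ_4$ onto itself), we have $\sum_{x \in \bbZ_4} \omega_4^{(x+y)^2} = \sum_{x \in \bbZ_4} \omega_4^{x^2}$. Applying Claim~\ref{claim:onesquare} (with $y=0$, or equivalently quoting its statement for the shifted sum directly) gives $2\sqrt{2}\,\zeta$, and combining with the $1/2$ factor and the pulled-out $\omega_4^{-y^2}$ yields $\sqrt{2}\,\zeta\,\omega_4^{-y^2}$, as required.

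No step here looks genuinely difficult: the only thing to be careful about is making sure that Claim~\ref{claim:respectmod2} is applied to the right polynomial (as a polynomial in $x$ alone, with $y$ viewed as a fixed constant rather than a variable) so that the ``respectful'' property really does apply. Once that observation is made, the rest is routine: one bijective reindexing and one invocation of Claim~\ref{claim:onesquare}.
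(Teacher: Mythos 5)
Your proof is correct and follows essentially the same route as the paper: the first equality via Claim~\ref{claim:respectmod2} applied to the respectful single-variable polynomial $x^2+2xy$ (with $y$ fixed), the second by expanding the square, and the third by invoking Claim~\ref{claim:onesquare}. The paper's proof is a terse one-line version of exactly the argument you gave, so your extra care about viewing $y$ as a constant is just filling in detail the authors left implicit.
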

Note that we changed from summing over binary values to summing over values in $\bbZ_4$.
\begin{proof}
The first equality follows from Claim~\ref{claim:respectmod2}. The second is opening parenthesis. The third is an application of Claim~\ref{claim:onesquare}.
\end{proof}

The following is derived by direct calculation.
\begin{claim}\label{claim:linearcase}
	Let $x$ be a (scalar) variable and let $y$ be a $k$-dimensional vector of variables. Let $\ell(y)$ be a linear function and $g(y)$ be an arbitrary function. Then,
	\begin{align*}
		\sum_{x \in \binset} \omega_4^{2 x \ell(y) + g(y)} &= 2 \omega_4^{g(y)} \delta_{(\ell(y)\bmod{2})} ~.
	\end{align*}
Namely, the expression is $0$ if $\ell(y) \neq 0 \mod{2}$ and is equal to $2 \omega_4^{g(y)}$, otherwise. 
\end{claim}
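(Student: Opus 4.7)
The plan is to perform the binary sum explicitly by evaluating the two terms $x=0$ and $x=1$, and to reduce using the identity $\omega_4^2 = -1$. Since the exponent depends on $x$ only through the linear expression $2x\ell(y)$, the phase $\omega_4^{g(y)}$ factors out of both terms, leaving a sum that is trivially computable in terms of the parity of $\ell(y)$.

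Concretely, I would first write
\begin{align*}
\sum_{x \in \binset} \omega_4^{2x\ell(y) + g(y)}
= \omega_4^{g(y)}\Bigl(\omega_4^{0} + \omega_4^{2\ell(y)}\Bigr)
= \omega_4^{g(y)}\bigl(1 + (-1)^{\ell(y)}\bigr),
\end{align*}
where in the last step I use that $\omega_4^{2m} = (\omega_4^2)^m = (-1)^m$ for any integer $m$, so $\omega_4^{2\ell(y)}$ depends only on $\ell(y) \bmod 2$. Then I would observe that $1 + (-1)^{\ell(y)}$ equals $2$ when $\ell(y)$ is even and $0$ when $\ell(y)$ is odd, which is precisely $2\,\delta_{(\ell(y) \bmod 2)}$ in the notation of the claim. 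Combining yields the desired equality.

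There is essentially no obstacle here: the claim is a one-line consequence of $\omega_4^2 = -1$ once the sum is split. The only point worth flagging is that $\ell(y)$ is integer-valued (as $\ell$ has integer coefficients and $y$ ranges over integer vectors in the intended application), which is exactly what makes $(-1)^{\ell(y)}$ well-defined and ensures the parity characterization is meaningful.
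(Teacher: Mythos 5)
Your calculation is correct and is exactly the ``direct calculation'' the paper alludes to without writing out: split the binary sum into the $x=0$ and $x=1$ terms, factor out $\omega_4^{g(y)}$, and use $\omega_4^2=-1$ to reduce $1+\omega_4^{2\ell(y)}$ to $2\delta_{(\ell(y)\bmod 2)}$. The paper provides no explicit proof for this claim, so your proposal simply fills in the intended one-liner; nothing further is needed.
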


\begin{proof}[Proof of Lemma~\ref{lem:clliford2}]
	We induct  
	on $n$. The base case of $n=0$ (i.e., $h$ is a constant) follows from definition. For general $n$, assume w.l.o.g that $h$ is not degenerate in $x_n$, i.e., $x_n$ appears in the expression. Otherwise, the claim just follows from the induction hypothesis. 
	Write $$h(x)=c_n x_n^2 + 2 x_n \ell(x') + g(x')$$ where $x'=(x_1, \ldots, x_{n-1})$, $\ell$ is a linear function over $\bbZ_2$ and $g$ is a respectful quadratic polynomial. 
	Note that the definition of a respectful polynomial allows us to assume that 
	$c_n \neq 2$, as $2x_n^2 = 2x_n \pmod{4}$. 
	Furthermore, the case $c_n=-1$ can be derived from the case $c_n=1$, since negating $h$ is equivalent to taking the complex conjugate of the root of unity and thus of the sum being calculated. We, therefore, only  
	consider the cases where $c_n \in \binset$.

	\begin{enumerate}
		\item If $c_n =0$, then we can apply Claim~\ref{claim:linearcase} to conclude that
	\begin{align*}
	\sum_{x \in \binset^n} \omega_4^{h(x)} = 2 \sum_{x' \in \binset^{n-1}} \omega_4^{g(x')} \delta_{\ell(x')}~.
	\end{align*}
If $\ell$ is a constant polynomial, then we have $2 \sum_{x' \in \binset^{n-1}} \omega_4^{g(x')} \cdot z$ for some $z\in\binset$, and the lemma follows by induction. Otherwise, we consider the binary subspace cut out by $\ell(x')=0 \pmod{2}$. Concretely, there is a binary matrix $B$, vector $d$ and variables $y = (y_1, \ldots, y_{n-2})$ such that the following subsets of $\Z_2^{n-1}$ are equal;

\[ \{ x'\in\binset^{n-1} : \ell(x')=0 \pmod{2} \} = \{ By+d \pmod{2} : y \in \binset^{n-2} \}~,\]

Therefore, in this case, 

	\[\sum_{x \in \binset^n} \omega_4^{h(x)} = 2 \sum_{y \in \binset^{n-2}} \omega_4^{g(By+d \pmod{2})} = 2 \sum_{y \in \binset^{n-2}} \omega_4^{g(By+d)}~,\]

where the second inequality follows from Claim~\ref{claim:respectmod2}. We can now apply the induction hypothesis over $g'(y)=g(By)$ (which remains respectful by Claim~\ref{claim:respectfullinear}).

\item If $c_n = 1$, we apply Lemma~\ref{lem:squarecase} to conclude that
	\begin{align*}
	\sum_{x \in \binset^n} \omega_4^{h(x)} = \sqrt{2} \zeta \sum_{x' \in \binset^{n-1}} \omega_4^{g(x')-(\ell(x'))^2}~.
\end{align*}
Applying the induction hypothesis with $g'(x') = g(x')-(\ell(x'))^2$, which is necessarily respectful (this can be viewed by change of variables or directly by opening the square), establishes the lemma in this case as well.

	\end{enumerate}
	This concludes the proof of the lemma.
\end{proof}

\end{document}